\documentclass[12pt]{article}
\usepackage[margin=1in]{geometry}
\usepackage{graphicx} \usepackage{multirow}
\usepackage{amsmath,amsthm,amssymb}
\usepackage[pagewise]{lineno}
\usepackage{natbib}
\usepackage{setspace}
\usepackage{url}
\usepackage[colorlinks=true,linkcolor=blue,citecolor=blue]{hyperref}
\usepackage[table]{xcolor}
\usepackage{thmtools}

\newcommand{\p}[1]{\left(#1\right)}
\renewcommand{\b}[1]{\left[#1\right]}

\newcommand{\EE}[2]{\mathbb{E}_{#1}\b{#2}}
\def\mc{\mathcal}

\def\given{\; | \;}
\def\mujall{\hat \mu_j^{\text{all}}}
\def\muoall{\hat \mu_1^{\text{all}}}
\def\mutall{\hat \mu_2^{\text{all}}}

\def\PH{\mathbb{P}_{\mc H_0}}
\def\PX{\mathbb{P}_{\mc X}}
\def\PXk{\mathbb{P}_{\vX}}
\def\Pboth{\mathbb{P}_{\mc H_0, \mc X}}
\def\Pbothk{\mathbb{P}_{\mc H_0, \vX}}
\def\mono{\textnormal{mono}}
\def\poly{\textnormal{poly}}

\def\allwrong{Z_{\textnormal{all-wrong}}}
\def\somewrong{Z_{\textnormal{some-wrong}}}
\def\allright{Z_{\textnormal{all-right}}}
\def\asto{\overset{a.s.}{\longrightarrow}}

\def\vX{\vec{\mc X}}
\def\hist{\mc H_0}

\usepackage{array}

\usepackage{float}

\definecolor{rowgray}{gray}{0.9} 

\usepackage{makecell}

\newtheorem{theorem}{Theorem}

\newtheorem{lemma}[theorem]{Lemma}

\usepackage{cleveref}
\crefname{objection}{Objection}{Objections}
\Crefname{objection}{Objection}{Objections}

\title{Algorithmic Monoculture and its Critics\thanks{Authors contributed
equally. For helpful feedback and discussion, we would like to thank Kathleen
Creel, Garrett Cullity, Kevin Dorst, Rachel Fraser, Nikhil Garg, Daniel Greco,
Toby Handfield, Alan H\'{a}jek, Caspar Hare, Ren\'ee J\o rgensen, Harvey
Lederman, Daniel Mu\~{n}oz, Jacob Nebel, Kenny Peng, Alex Slivkins, Jack Spencer, Daniel Wodak, Patrick Wu, and audiences at MIT, Yale, and the Ranch Metaphysics Workshop.}}
\author{Brian Hedden \\ MIT \and Manish Raghavan \\ MIT}
\date{}

\begin{document}

\maketitle

\begin{abstract}
Algorithmic decision-making is replacing idiosyncratic human judgment in domains
such as hiring, lending, and criminal justice. This shift promises increased
consistency, but many scholars worry that it can go too far. They warn of the
dangers of algorithmic monoculture, in which all decisions across a domain are
made using a single algorithm. We systematically evaluate a range of objections
to monoculture, formalizing and rigorously assessing familiar critiques
alongside novel ones. These objections concern systemic exclusion, agency and
gaming, and information aggregation and exploration.
We conclude that
monoculture is less problematic than its critics have supposed: commonly cited
objections fail, and while other objections have some force, they are not decisive against monoculture in general.
\end{abstract}

\section{Introduction}

Algorithmic decision-making tools are replacing idiosyncratic human judgment in domains such as hiring, lending, medicine, and criminal
justice.
 This shift has the potential to reduce bias and to increase
consistency~\citep{kleinberg2020algorithms}. 
Reduced bias would be great, but many people doubt whether it is
likely.\footnote{For seminal discussion, see \citet{barocas2016big}.} Increased
consistency, on the other hand, is more likely, and certainly achievable. But
can we get too much of a good thing? Many scholars, including philosophers,
legal scholars, and computer scientists think so. They warn against the dangers
of \textit{algorithmic monoculture}, in which all decisions of a certain type
are made using the same algorithm \citep{kleinberg2021algorithmic,
creel2022algorithmic, bommasani2022picking, toups2023ecosystem,
jain2024algorithmic, CreelManuscript-CREAMA}.\footnote{\citet{peng2024monoculture}  raise worries about monoculture in hiring which we discuss in \S\ref{subsec:aggregation} below, but they also point out some benefits of monoculture.} 

In this vein, \citet{creel2022algorithmic} raise the alarm about the rise of `algorithmic leviathans.'\footnote{They go so far as to propose `making it illegal for one company or one company’s algorithm to dominate an entire sector of hiring or lending' (pp. 35-6).} \citet{oneil2016weapons} describes `weapons of math destruction,' which allegedly are especially harmful when deployed at scale. And \citet[1]{bommasani2022picking} worry that monoculture will yield `outcome homogenization' and `institutionalize systemic exclusion and reinscribe social hierarchy.'

Consider the obvious analogy with agriculture. There, monoculture, in which we plant only a single crop species, increases the risk of total crop failure. Better to have polyculture, so that even if one crop fails due to pests or disease, it's unlikely that all of them will.\footnote{Similar analogies hold in the context of computer security
\citep{birman2009monoculture}.}

But we shouldn't be convinced about the badness of algorithmic monoculture just by a quick analogy with agriculture. Nor, of course, should we just trust our gut reactions. 

We evaluate algorithmic monoculture by considering three objections to it. The
first and most prominent (\S\ref{sec:exclusion}) says that monoculture threatens
to systemically exclude certain people or groups from valuable opportunities.
The second (\S\ref{sec:agency}) says that monoculture gives people too little
agency in some respects (making it harder for them to work for a better outcome)
and too much agency in others (making it too easy for them to `game the system'
and improve their prospects without genuinely improving their merit). We argue
that these two objections are not compelling. We doubt whether monoculture
raises the risk of systemic exclusion, and considerations of agency point in
both directions, some favoring algorithmic polyculture (in which different decisions are made using different algorithms) and others favoring monoculture.

A more compelling objection (\S\ref{sec:information}) says that monoculture is inferior to polyculture from an informational perspective. This is because (\S\ref{subsec:aggregation}) it fails to take advantage of the `wisdom of crowds' that we could get by deploying a diverse set of algorithms, and because (\S\ref{subsec:exploration}) it does poorly on the explore-exploit trade-off, tending to favor `known quantities' rather than exploring new types of candidates about whom there is more uncertainty. We show that such informational considerations really do favor polyculture over monoculture, provided that the sole algorithm deployed under monoculture is on a par with each of the various algorithms deployed under polyculture. However, a more sophisticated monoculture built out of polyculture by packaging together all of the latter's various algorithms into a single `ensemble' algorithm can perform at least as well as the original polyculture, and often better. But it is unclear whether such ensembling is practically feasible.

Throughout our discussion, we focus primarily on the domain of hiring, though we also briefly consider other domains such as lending, generative AI, and drug discovery in \S\S\,\ref{sec:information}--\ref{sec:discussion}.

\section{Exclusion}
\label{sec:exclusion}
The most prominent objection to monoculture, advanced by Creel and a series of co-authors \citep{creel2022algorithmic, bommasani2022picking, toups2023ecosystem, CreelManuscript-CREAMA}, is that monoculture threatens to systemically exclude certain people from valuable opportunities like employment. At its core, the worry is intuitive:
Suppose you are rejected by one firm due to its algorithm's assessment of you. If different firms use different algorithms, this isn't a huge deal. The next firm will evaluate you with a different algorithm, and you might well get a job there. But if all firms use that same algorithm, then you'll likely be left jobless, the victim of `algorithmic blackballing,' to use \citet{ajunwa2020auditing}'s phrase.\footnote{As \citet[1]{kleinberg2021algorithmic} put it, `If all employers or lenders use the
same algorithm for their screening decisions, then particular applicants might find themselves locked out of the market when this shared algorithm doesn't like their application for some reason.' They do not explicitly endorse this, but describe it as a `common concern.'

\citet[see esp. ch. 6]{oneil2016weapons} advances a version of this argument. One of the risks of algorithmic decision-making, in her view, has to do with scale: `scale is what turns WMDs [weapons of math destruction] from local nuisances into tsunami forces, ones that define and delimit our lives' (p. 29). And she tells the story of Kyle Behm, whose job application was rejected over and over. He eventually learned that all the firms he applied to were using the same hiring algorithm, one which seemingly picked up on his mental health condition and downgraded him on that basis.}

Creel focuses on r\'{e}sum\'{e} screening algorithms, which screen out some applications and screen in others, which are then considered by human recruiters. Under monoculture, the same applications will be screened out by every firm, and in this sense, the corresponding candidates will be systemically excluded. By contrast, under polyculture, different applications will be screened out, and different ones screened in, by different firms. And so more candidates will have their applications considered by humans, with fewer thus being systemically excluded. 

\citet{CreelManuscript-CREAMA} then grounds the systemic exclusion objection in contractualism \citep{Scanlon1998-SCAWWO-5}. The idea is that those systemically excluded under monoculture could reasonably reject monoculture as an overall system, provided there exists a feasible, low-cost alternative that reduces or eliminates systemic exclusion (instead of just relocating it to a different subset of the population). Her preferred alternative involves randomization by individual firms, as opposed to a polyculture of deterministic algorithms. But in any case, the key is that there exist feasible, low-cost alternatives which reduce this sort of systemic exclusion. 

We are skeptical of Creel's emphasis on whether or not someone is screened in, over and above whether or not they ultimately get the job. She says that only those whose r\'{e}sum\'{e}s are screened in by a firm and considered by a human count as having had a chance at the job, or an opportunity to get it. She writes that `a chance is an algorithmic score that guarantees that the candidate will receive human consideration' (p. 23) and describes those screened out as having been `excluded from opportunity' (p. 24). This is mistaken. Consider sports which distinguish between a regular season and the playoffs. It would be absurd to say that those teams which don't wind up making the playoffs started the season with no chance to win the championship. Of course, once we get to the playoffs, those teams that didn't qualify no longer have a chance. But they still had a chance at the outset. Similarly with r\'{e}sum\'{e} screening and jobs. You may no longer have a chance once you fully submit your application and are screened out. But you still had a chance,\footnote{This is especially true if chances are epistemic rather than objective. We don't know which candidates will be screened out, and so everyone initially has a non-zero epistemic chance of being screened in. But everyone may \textit{also} have a non-zero objective chance of being screened in, at least prior to submitting their applications. See \citet[20-2]{CreelManuscript-CREAMA} for related discussion.} and certainly an opportunity, earlier in the process, for instance when developing your application materials or while engaging with interactive screening algorithms. 

Nonetheless, one might claim that being screened in benefits you, even holding fixed whether or not you ultimately get the job. We concede that in some cases, it could. If you learn that you were at least screened in, this might give you confidence and motivation. And if you thereby get an interview, this gives you a bit of practice, even if you don't get the job. One might even think that receiving human consideration is itself an intrinsic good. Creel is officially neutral here. But if being screened in is itself a good, perhaps it ought to be distributed more evenly, whether for Creel's contractualist reasons or ordinary consequentialist ones stemming from goods having diminishing marginal utility. 

But being screened in will often be of little or no benefit if you don't wind up getting the job, especially if you never learn you were screened in and don't get an interview. So a system that grants human consideration to more candidates won't be meaningful better, unless it also changes who gets jobs. We therefore focus on how monoculture and polyculture affect which candidates ultimately get jobs. 

We also focus on a setting in which hiring is done wholly algorithmically, with algorithms providing a full ranking of candidates, as opposed to just screening some in for human review and others out.
This allows us to study a limiting case of monoculture, where firms' decisions are maximally correlated via an algorithm.
If this extreme form of monoculture is fairly benign, then less extreme forms are probably benign as well.\footnote{One can also imagine the whole hiring process, including any interviewing, being fully automated.} 

In our preferred setting, can we formulate a compelling objection to monoculture based on systemic exclusion? We'll consider several attempts to do so. We begin with the objection sketched at the beginning of this section. This objection says that under monoculture, if you're rejected by one firm, then you'll be rejected by the others, since they're all using the same algorithm. But this is mistaken. Suppose that the firms move in sequence, each hiring its top-ranked candidate among those remaining. In monoculture, the firms use a shared algorithm, and so they rank candidates in the same way. Then, under monoculture, even if you don't get hired by one firm, you might well get hired by the next, in particular if you were the runner-up at the former. Rejection by one firm doesn't mean rejection by all, since the pools of remaining candidates aren't the same for all firms. 

More generally, with a fixed number of jobs and a fixed number of candidates, the same number of candidates will be left jobless under monoculture as under polyculture \citep{peng2024monoculture}, regardless of whether firms make offers sequentially or simultaneously.\footnote{As \citet[7]{peng2024monoculture} write about their model, `Our results challenge the prevalent intuition that monoculture increases the rate of systemic exclusion, i.e., the probability of an applicant being unmatched. Indeed, in our model, the total number of applicants matched remains the same regardless of whether applicants are evaluated according to monoculture or polyculture.' We discuss their model in \S\ref{subsec:aggregation}.}

Of course, if we imagine that a firm will hire you if and only if its algorithm gives you a score above some threshold, one which is fixed for all firms and doesn't depend on whether we have monoculture or polyculture, then you are indeed more likely to be hired under polyculture than under monoculture.\footnote{This assumes that the scores given by the algorithm under monoculture are not systematically higher than those assigned by the various algorithms under polyculture.} But that's because we're imagining a situation in which there are more jobs under polyculture! Absent any empirical evidence, however, we should not expect polyculture to yield a substantially lower unemployment rate than monoculture.\footnote{A small caveat is that monoculture could result in congestion in the hiring market, where searches fail as firms herd around the same candidates. See \citet{baek2025hiring} for discussion of how firms under monoculture might act strategically to mitigate such congestion.}

Now, even if monoculture and polyculture will exclude the \emph{same number} of people from jobs, they will likely exclude \emph{different} people. This would be especially concerning if monoculture increased the probability of systemic exclusion of members of some socially salient group, like those of a certain race or gender. In this vein, \citet[1]{bommasani2022picking} worry that monoculture will `reinscribe social hierarchy.'\footnote{\citet[37]{creel2022algorithmic} also raise this concern, though they think that monoculture is problematic even if those excluded under it don't constitute a socially salient group. \citet{CreelManuscript-CREAMA} explicitly argues that harms from monoculture cannot be reduced to discrimination.}
This yields a different way of cashing out the objection from systemic exclusion: monoculture is problematic, since it raises the risk of systemically excluding members of some socially salient group.

Here's why you might think that monoculture is more likely to systemically exclude members of some socially salient group. Suppose that there are 100 candidates and 90 firms, which will each hire a single candidate. Then, under monoculture, members of a given group will all be left jobless if and only if the shared algorithm ranks them in the bottom 10\% of candidates. But under polyculture, they'll be left jobless if and only if each of the 90  algorithms ranks them in the bottom 10\%. And it's far more likely that one algorithm will be biased against that group than that all 90 will. 

But this is mistaken. It's true that under monoculture, you'll be jobless if and only if the shared algorithm ranks you in the bottom 10\%. And it's true that under polyculture, this will happen \textit{if} every algorithm ranks you in the bottom 10\%. But it's false that under polyculture, this will happen \textit{only if} they all rank you in the bottom 10\%. Indeed, you can be left jobless even if all algorithms rank you 2\textsuperscript{nd} overall; this will happen if each firm hires one person and each algorithm yields a different top-ranked candidate. So while it is indeed more likely that a single algorithm will be biased against some group than that many will be, this doesn't mean that monoculture is more likely than polyculture to systemically exclude its members, since there are many other ways for polyculture to yield such exclusion.\footnote{An example may help. Consider exclusion of individuals rather than groups. Let there be 3 candidates (A, B, and C) and two firms (Firm 1 and Firm 2, with the former hiring first). Under monoculture, there are six ways for the shared algorithm to rank the candidates, and candidate C is left jobless in two of these, namely those where they're ranked third (i.e. $A\succ B\succ C$ or $B\succ A\succ C$). 

Under polyculture, there are six ways for each of the two algorithms to rank the three candidates, so there are 36 possible pairs of rankings. Any given candidate is left jobless in 12 of these, namely all of those in which Firm 1's algorithm ranks someone else first, and Firm 2's algorithm ranks them below the other remaining candidate. So candidate C will be left jobless whenever (i) Firm 1's algorithm ranks A first (i.e. $A\succ_1 B\succ_1 C$ or $A\succ_1 C\succ_1 B$) and Firm 2's algorithm ranks B above C (i.e. $A\succ_2 B\succ_2 C$ or $B\succ_2 A\succ_2 C$ or $B\succ_2 C\succ_2 A$), or (ii) Firm 1's algorithm ranks B first (i.e. $B\succ_1 A\succ_1 C$ or $B\succ_1 C\succ_1 A$) and Firm 2's algorithm ranks A above C (i.e. $A\succ_2 B\succ_2 C$ or $B\succ_2 A\succ_2 C$ or $A\succ_2 C\succ_2 B$). There are six possibilities in (i) and six more in (ii), for 12 total.}

Of course, rebutting that tempting reasoning doesn't settle the question of
which system is more likely to systemically exclude some group. But the answer
to that question depends on assumptions about what information we have available
and how algorithms operate with respect to socially salient groups. Suppose we assume a certain veil of ignorance where we have no information about the
algorithms in question, so that algorithms can be treated as randomly generated rankings of candidates.  Then, under both monoculture and polyculture, every combination of $n$ people (where $n$ is the number of candidates minus the number of jobs) is equally likely to constitute those left jobless. And so any socially salient group is equally likely to be systemically excluded under monoculture as under polyculture.\footnote{To illustrate, suppose that our group of 100 candidates is partitioned into 10 socially salient groups of 10 candidates each. Then, behind this veil of ignorance, the probability that any given 10-candidate group (whether socially salient or not) is left jobless is one divided by the number of different ways of choosing 10 candidates out of 100, i.e. $\frac{1}{\binom{100}{10}} = \frac{10!\times 90!}{100!}$. This is true both under monoculture and under polyculture.}

Of course, relative to other states of information, monoculture is more likely to systemically exclude some socially salient group. Suppose we know that every  algorithm we might deploy has an extreme bias against some group or other, ranking all its members at the bottom. Then, monoculture is guaranteed to systemically exclude some socially salient group. But under polyculture, it's still possible  that someone from every group gets a job. 

Having said that, there are other states of information relative to which polyculture is more likely to systemically exclude some socially salient group. Suppose we know that \textit{no} algorithm we might employ has the sort of extreme bias just considered. Then, monoculture is guaranteed \textit{not} to systemically exclude any socially salient group. But polyculture could still do so, since it can systemically exclude some socially salient group even if its members aren't all ranked at the bottom by any algorithm.

We conclude that it is far from clear that monoculture increases the risk of systemic exclusion of socially salient groups. There are models where this risk is the same under monoculture and polyculture, models where it is higher under monoculture, and models where it is higher under polyculture. And none are obviously more realistic than the others.  

The above models do, however, illustrate one point in monoculture's favor. Monoculture might make it easier to detect and eliminate discrimination and other undesirable phenomena, since hiring outcomes are directly and transparently tied to a single algorithm's ranking. By contrast, under polyculture, outcomes result from a complex interplay between many different algorithms' rankings and other structural features like the firms' hiring order.\footnote{See \citet{kleinberg2020algorithms} for broader discussion of how algorithmic decision-making can make it easier to detect and mitigate bias.}

Let's change tack. Perhaps we can rehabilitate the exclusion objection by framing it in terms of \textit{chances} rather than \textit{outcomes}. Perhaps polyculture gives more people a \textit{chance} of getting a job. We have already mentioned chances in connection with r\'{e}sum\'{e} screening algorithms, where we rejected the thought that only those screened in had a chance of getting the job. What about in our setting, where the whole hiring process is done algorithmically?  Under monoculture, one might think, your chance of getting a job is either 0 or 1, depending on where you're ranked by the sole algorithm being used. But under polyculture, most people will have intermediate chances of getting a job. And so fewer people will have non-zero chances of getting a job under monoculture than under polyculture. Then, even if monoculture and polyculture are equally exclusionary from an \textit{ex post} perspective (since the same number will wind up jobless in either case), monoculture is more exclusionary than polyculture from an \textit{ex ante} perspective.

Now, an initial thing to say in response is that monoculture needn't be
deterministic, and polyculture needn't be stochastic. For the latter, note that if each algorithm is deterministic, and the order in which firms hire is settled, each candidate's chance of getting a job is either 0 or 1. Of course, in real life, there is likely to be some chanciness in the order in which firms hire, and in polyculture (but not monoculture), changing the hiring order can change which candidates get jobs. (Suppose Firm 1 ranks candidates $A\succ B\succ C$ and Firm 2 ranks them $A\succ C\succ B$. If Firm 1 goes first, then $A$ and $C$ get hired, with $B$ left jobless, while if Firm 2 goes first, then $A$ and $B$ get hired, with $C$ left jobless.) More importantly, monoculture need not be fixed or deterministic.\footnote{As noted, Creel favors randomization in r\'{e}sum\'{e} screening algorithms.} Firms regularly experiment with deployed algorithms, and algorithms can behave non-deterministically for a variety of reasons~\citep{Redstone2025}.\footnote{Moreover, a single algorithm may deliver different results to different firms based on a candidate's estimated fit for that particular firm, rather than a static estimate of their quality~\citep{ha2015personalized}. Admittedly, this amounts to a bit of a departure from the sort of monoculture we've been considering, in which each firm uses the same ranking of candidates to make its decisions.}

Let us temporarily set this aside for the sake of argument, and assume that monoculture would give every candidate an extremal chance (0 or 1) of being hired, while polyculture would give every candidate an intermediate chance. Why should this favor polyculture over monoculture? Arguably, all that matters is what happens \textit{ex post}, and \textit{ex ante} considerations have no independent moral significance. But this anti-\textit{ex ante} view is controversial, and we don't want to rely on it. Many theorists think that chances matter for fairness. For instance, \citet{Broome1991-BROV-4} holds that when people have equal claims to some good (say, because they have the same need or merit), then if it cannot be divided between them, fairness requires that they be given equal chances of receiving it. 

But the cases we're considering are not at all like this. Job candidates are not all equally deserving. Some are more qualified and productive than others, and so they have unequal claims to being hired on grounds of merit.\footnote{Some are also better off than others, and so they have unequal claims on grounds of need.}  Now, Broome does claim that even when people's claims are not exactly equal, we should still give people intermediate, albeit unequal, chances of receiving the good.\footnote{\citet{jain2024scarce} appeal to Broome's view to argue that scarce resource allocations made using machine learning tools should involve some randomization.} Instead of an equally-weighted lottery to distribute the good, we should run an unequally-weighted one, where each person's chance of winning is proportional to the strength of their claim.\footnote{Compare \citet[1]{jain2024algorithmic}, who suggest that `Formal views of equality of opportunity often contend that all decision subjects should have a chance at a positive outcome and that their likelihood of receiving that outcome accords with their merit or desert.'} But he gives no argument for this; he just says that he finds it plausible.\footnote{He writes, `We have agreed that fairness requires everyone to have an equal chance when their claims are exactly equal. Then it is implausible it should require some people to have no chance at all when their claims fall only a little below equality' (p. 99).} We find it far from obvious. Perhaps when people have unequally strong claims, fairness requires that the good(s) simply be given to those with the strongest claims.\footnote{You might object that this would be unfair in an iterated setting, since the good(s) would always be given to the same people. But if people have claims on grounds of need or interest (and not just on grounds of merit), then the same people wouldn't always have the strongest claims at each time, since receiving the good in one time period would typically lessen one's need for it in the next.}

Having said that, we do not even need to reject Broome's theory of fairness in order to defend monoculture. Suppose that fairness requires that candidates' chances of being hired be proportional to their merit, and that everyone has non-zero merit and hence ought to have a non-zero chance of being hired. Then, even if monoculture gives everyone extremal chances of being hired while polyculture gives everyone intermediate chances, it \textit{still} does not follow that polyculture is better than monoculture on grounds of fairness. For the distribution of chances of being hired yielded by monoculture could still be closer to the ideally fair distribution than is that yielded by polyculture. After all, a probability distribution with only extremal probabilities can be closer to one with only intermediate probabilities than is another which also only has intermediate probabilities.\footnote{For instance, $\langle 1, 0\rangle$ is closer\textemdash both intuitively and on most measures\textemdash to $\langle 0.9, 0.1\rangle$ than is $\langle 0.5, 0.5\rangle$.}
This is an instance of a broader lesson: the fact that the ideal state has some property doesn't mean that non-ideal states with that property are better than non-ideal states without that property \citep{lipsey1956general}.  Perhaps in the ideally just society there would be no soup kitchens, but that doesn't mean that we should close the soup kitchens in our highly non-ideal society!

Let us consider a final formulation of the exclusion objection. Even if monoculture and polyculture are equally exclusionary in a single round of hiring, perhaps monoculture will tend to result in the same people being left jobless year after year. If so, monoculture would yield constant employment for some and long-term unemployment for others, whereas polyculture would yield a more equal distribution of jobs over time.\footnote{See also \citet[24]{CreelManuscript-CREAMA} for discussion of monoculture and inequality over time.}

To take a toy model, suppose that jobs last for one year, after which everyone reapplies to all firms. In this setup, why might you think that  monoculture would result in the same candidates being left jobless year after year? The thought would have to be that algorithms never change their rankings of candidates, for instance because they base their rankings solely on immutable features. But if algorithms never change their rankings, wouldn't polyculture similarly result in the same candidates being hired each year? Not necessarily. As we saw earlier, under polyculture (but not monoculture), the order in which firms hire can affect who gets a job, and this is something that could change year-to-year, even if algorithms never change their rankings. Polyculture has an additional source of noise and variability relative to monoculture, namely the hiring order of the firms.\footnote{In the simultaneous hiring models considered in \S\ref{subsec:aggregation}, candidates' preferences over the firms play a similar role. Even if firms' algorithms never change their rankings of candidates, changes in candidates' preferences can result in changes in who gets hired.} 

Having said that, it's implausible that decent algorithms would base their rankings solely on immutable features or, more generally, never change their rankings of candidates. Any algorithm worth its salt will care about things like recent job performance, new credentials, and other features that can change over time. What happens when we allow algorithms to rank candidates differently in different years? Well, when algorithms change who is ranked at the bottom, then for monoculture, this is \textit{guaranteed} to change who is left jobless. But not so for polyculture. Under polyculture, the same people can be left jobless even when every algorithm completely reverses its ranking from one year to the next. To illustrate, suppose that in the first round, Firm 1 ranks candidates $A\succ B\succ C$ while Firm 2 ranks them $C\succ B\succ A$. Here, $A$ and $C$ are the ones who get hired, and $B$ is left jobless. Then, in the next year, each firm's algorithm completely reverses its ranking, as a result of changes in the candidates' features. So now, Firm 1 ranks them $C\succ B\succ A$ while Firm 2 ranks them $A\succ B\succ C$. And again, $A$ and $C$ are the ones who get hired, with $B$ left jobless. Similarly, under polyculture, the same candidates can be left jobless even when they all move up in every algorithm's ranking from one year to the next. To illustrate, suppose that in the first year, Firm 1 ranks candidates $A\succ B\succ C$ and Firm 2 ranks candidates $B\succ A\succ C$, and so $A$ and $B$ are hired, with $C$ left jobless. In the second year, Firm 1 ranks candidates $A\succ C\succ B$ and Firm 2 ranks candidates $B\succ C\succ A$. Then, even though $C$ has moved up in each firm's ranking, once again $A$ and $B$ are hired, with $C$ left jobless. So monoculture may be more responsive to changes in rankings than is polyculture.  

Stepping back, inequality in employment over time will depend on how much
across-period variance there is in the selection
process~\citep[e.g.,~][]{shorrocks1978measurement}. With little or no variance, the employment distribution will be highly unequal, whereas with high variance, we should expect more equality over time. The potential argument, then, is that monoculture will yield less variance than polyculture. This is certainly possible. But there are also reasons to believe that polyculture could have lower variance. As we discuss in \S\ref{sec:information}, a key potential benefit of polyculture is its ability to aggregate information via the ``wisdom of the crowd.'' But aggregating information to increase accuracy typically reduces variance! Intuitively, a system that produces dramatically different predictions from one year to another cannot be accurate (unless the world itself has changed dramatically in that time). 

We conclude that it's far from clear that polyculture will yield greater
variability in who is left jobless from one year to the next. Even if
monoculture may appear to reduce variance by applying a single, fixed evaluation
scheme, the analogous polyculture need not be any more dynamic. A fixed system,
collectively aggregating information from heterogeneous algorithms, may still
arrive at the same decisions year after year. And to the extent that
polyculture's wisdom of the crowd aggregates information more efficiently than a
monoculture, we might instead expect to see \textit{greater} inequality under
polyculture.

We have considered a range of attempts to cash out the thought that monoculture is objectionable as a result of increasing the risk of systemic exclusion. And we have argued that none are compelling. So at least as things stand, we conclude that considerations of exclusion do not favor polyculture over monoculture.

\section{Agency}
\label{sec:agency}

Perhaps monoculture is objectionable for reasons having to do with
\textit{agency}. First, one might worry that monoculture gives people
\textit{too little agency}, limiting their ability to improve their outcomes through hard work, discipline, and ingenuity. Second, and coming from the opposite direction, one might worry that monoculture gives people \textit{too much} agency, in some sense, by over-incentivizing strategic behavior and `gaming.' We have not seen these worries advanced in the literature, but versions of them have come up often in conversation. 

Start with the first. It is important for people to be able to exercise their agency and work for better outcomes.\footnote{As \citet{jain2024algorithmic} put it,  ``having a plurality of opportunities \dots is a prerequisite for the accompanying virtues of freedom and the ability to shape our lives \dots'' Note that while they also oppose algorithmic monoculture, they do not explicitly advance the objection we consider here.} But, the worry goes, monoculture may limit this ability. After all, suppose that you apply to a job and get rejected. Polyculture would let you learn from that rejection, improve your application materials, and practice your interview skills, thereby improving your chances of getting a job at the next firm to which you apply. But a very strict version of monoculture rules this out. If your scores on one firm's application process are directly forwarded to another firm, or if you are scored once and for all by a centralized platform, then you cannot hope to improve your future outcomes through your own effort. 

This objection is unconvincing. For one thing, it targets a caricatured  monoculture whose lone algorithm bases its rankings only on immutable features of candidates. A more attractive monoculture would have its  algorithm be sensitive to features like education which candidates can change through effort. For another, some ways in which polyculture gives people more agency may be undesirable. For instance, in the context of matching markets, \citet{peng2024monoculture} observe that submitting more applications will tend to improve one's match quality under polyculture but not under monoculture. But those who can submit more applications tend to be wealthier. More generally, those who are already better-off tend to have a greater ability to invest time, effort, and resources to improve their outcomes.\footnote{This needn't always be the case. As \citet{bambauer2018algorithm} note, sometimes it is the disadvantaged who will benefit from a greater ability to improve outcomes through time and effort. For instance, those who are unemployed might have more time available as well as greater motivation.}

Second, and coming from the opposite direction, one might worry that monoculture gives people \textit{too much} agency. In particular, it might incentivize `gaming,' or  improving your score on some metric without improving your underlying quality. Goodhart's Law says that when a measure becomes a target, it ceases to be a good measure. And with monoculture, there's one big target that everyone is aiming at. So perhaps monoculture incentivizes `gaming' to a greater extent than does polyculture.  

\citet{kleinberg2020classifiers} present a model under which candidates invest effort strategically to maximize their performance along different axes. There, gaming is most effective when there is a single evaluation for the candidate to target. Intuitively, gaming some metric increases your score on that metric by a lot, but leaves your scores on other metrics unchanged, whereas genuinely improving your quality increases your score on every metric, albeit by a more modest amount. This model might suggest that monoculture would incentivize gaming more strongly than polyculture, since there's just one algorithm in the former but many in the latter.

However, polyculture might still incentivize gaming. It might be better to pick one firm (or a few) and game its algorithm rather than genuinely improving your quality. The former could significantly improve your chances at that chosen firm, whereas the latter could improve your chances at every firm, but only modestly. We might also disincentivize gaming by converting our polyculture into a monoculture through `ensembling,' packaging the many algorithms together into a single one which, say, assigns each candidate the average score they received from the many. Then, genuine improvement will be better than gaming, provided that it's more effective at improving this average score. 

We have seen that the relationship between monoculture and strategic behavior is nuanced. In the case of \citet{peng2024monoculture}, candidates act strategically to \textit{submit more applications}, and polyculture leads to greater returns on their efforts. Coming from the other direction, the model of \citet{kleinberg2020classifiers} suggests that a simple-minded monoculture might be easily gameable. But a more sophisticated monoculture might effectively disincentivize gaming. Thus, the relationship between monoculture and strategic behavior is complex, and cannot be reduced to a single directional statement.

Monoculture's impact on agency is further complicated by the varying objectives
that decision subjects might have. Consider a version of polyculture in the
college admissions process in which half of all colleges accept SAT scores and
the other half accept ACT scores. How should a student prepare? For some
students, it may be prohibitively difficult to study for both exams. Instead, they might choose to focus on the SAT and only apply to colleges that accept
SAT scores. The student exercises agency to select which exam to take, how much
effort to invest in preparation, and which colleges to apply to.

Now consider what happens when the system shifts to a monoculture, where all
colleges accept both test scores. How does this impact a student's agency? Right
away, it might appear that students have \textit{more} agency: they can choose
the exam they feel best suited for while still applying to all of the colleges
of their choice. But not every student necessarily experiences increased agency.
Under the previous polyculture, an enterprising student could try to prepare for
\textit{both} exams, allowing them to apply to two disjoint pools of colleges
and increasing their overall chances of admission. Polyculture effectively
limits the competition for each pool of colleges, allowing a highly motivated
(or highly-resourced) student to derive benefits from additional effort. This is
no longer effective under monoculture, since every student is now competing
against the entire student population for each college.
Each student simply chooses an exam (ACT or SAT) and focuses their preparation
there.\footnote{Assuming students submit the max score between the two exams on a percentile basis.}

Structurally, polyculture and monoculture create different incentives.
Polyculture allows students to target a particular pool of colleges, based on
both their preferences and the overall competitive landscape. Monoculture breaks
down barriers to put students in the same competitive pool. For some students,
this increases their ability to better their outcomes, whereas others may find
that their efforts are no longer effective. By removing the ability to target
niche markets or exploit information silos, monoculture forces all candidates
into a single, transparent competition. In this way, the shift to monoculture
does not so much reduce agency as it does redistribute its returns and change
its primary mode of expression.

Ultimately, the relationship between monoculture and agency is far from
straightforward. While critics worry that a single decision-making tool will rob
individuals of the ability to improve their prospects, we have seen that the
same features of monoculture can, in other contexts, make strategic improvement
easier or reduce the unfair advantages enjoyed by those with the resources to
navigate a fragmented polyculture. Overall, how monoculture affects agency, and whether these effects are good or bad, will depend on the specific setting in question. But in our view, considerations of agency do not provide a compelling reason to prefer polyculture over monoculture, though we expect future research to offer more insights on the impacts of monoculture on agency in practice.

\section{Information}
\label{sec:information}

The final sort of objection to monoculture that we'll consider says that it is suboptimal from an informational perspective. We consider two versions of this objection. The first says that polyculture is likely to yield more accurate judgments at any given time, since it takes advantage of the `wisdom of crowds.' And the second says that polyculture will yield more exploration, generating information that allows it to outperform monoculture over time. 

\subsection{Aggregation and the Wisdom of Crowds}
\label{subsec:aggregation}

Start with the first. Again, this objection says that polyculture will probably be more accurate than monoculture, which in the hiring case means that it makes it more likely that objectively better candidates will be hired. And that's because polyculture takes advantage of the `wisdom of crowds,' whereby a group of diverse decision-makers does better in the aggregate than any single one. 

Two recent papers give models suggesting an argument along these lines. \citet{kleinberg2021algorithmic}  give a model in which two firms move in a random order, each hiring a single candidate. There is some objective ranking of candidates, to which firms have only imperfect access.\footnote{The assumption of a single objective ranking of candidates is, of course, an idealizing assumption.} Different algorithms have different accuracy levels, understood roughly as probabilities of yielding or approximating the objective ranking. But all algorithms under consideration are better than random. 

Each firm chooses whether to use its own, less accurate, private algorithm or a single, more accurate, public algorithm. Their private algorithms are independent of each other.\footnote{That is, conditional on the true ranking, one private algorithm's yielding a given ranking is probabilistically independent of the other's doing so.} Then, against some plausible assumptions, Kleinberg and Raghavan show that for any accuracy level for the private algorithms, there is some higher accuracy level for the public algorithm such that (i) each firm will prefer to use the public algorithm regardless of what the other firm does, but (ii) the expected average quality of the two candidates hired is higher if both use their private algorithms than if both use the public one.

For our purposes, the latter result is key. It says that polyculture can be
better than monoculture overall, even if the public algorithm is better than the
various private algorithms and firms make rational choices. Intuitively, this is
because independent decision-makers can collectively bring more information to
the table than a centralized decision-maker.

\citet{peng2024monoculture} give a different model with a similar upshot. There are many firms and many candidates. Candidates have idiosyncratic (here, uniformly random) preferences over firms. Each firm has the same number of jobs, and the total number of jobs is less than the total number of candidates. Firms want all their jobs filled, and every candidate prefers having a job to being unemployed.

Peng and Garg use the concept of a stable matching \citep{gale1962college}, which is an allocation of
candidates to firms such that no firm-candidate pair `would jointly prefer to
defect from the current matching to be matched with each other' (p. 5). They show that in their model, a matching is stable if and only if it is characterized by shared cutoffs across all firms, with each firm extending offers to all candidates to whom its algorithm gives a score above its cutoff and candidates accepting their most preferred offer.\footnote{They assume a continuum of candidates, from which it follows that stable matchings are characterized by a vector of firm-specific cutoffs which are market-clearing, meaning that they result in all jobs being filled \citep{azevedo2016supply}. In Peng and Garg's model, because firms are symmetric and candidates' preferences over firms are uniformly random, the cutoff for each firm is the same.}

Under monoculture, each firm uses the same algorithm, while under polyculture, each firm uses a different one. Peng and Garg assume that the available algorithms are independent but equally accurate in expectation. Specifically, each algorithm's score for each candidate is equal to their objective value plus some noise value, with each noise value being drawn independently from some fixed noise distribution.

Peng and Garg show (under certain assumptions) that under polyculture, as the
number of firms increases, the probability that only the best candidates are
hired goes to 1.\footnote{This holds as long as the noise distribution has a
  lighter-than exponential tail. \citet{peng2024wisdom} show that they opposite
  holds for heavy-tailed noise distributions, meaning polyculture amounts to
  uniform random guessing in the limit. They further extend their results to the
  case where firms have different numbers of jobs, and where candidates have
preferences over firms that needn't be uniformly random.} And so in the limit, polyculture is more accurate than any given monoculture, provided the latter is still imperfect. This is because under polyculture, a candidate gets an offer if and only if the highest score they get from any algorithm exceeds the shared cutoff.\footnote{By contrast, under monoculture, a candidate gets an offer if and only if the score they get from the one algorithm exceeds the cutoff. The shared cutoff will therefore be higher under polyculture than monoculture.} And if one candidate has a higher value than another, the probability that the former's highest score is greater than the latter's goes to 1 as the number of algorithms goes to infinity.\footnote{This assumes that the noise distribution has a lighter-than-exponential tail. \citet{peng2024wisdom} prove that if it is long-tailed, then we get the opposite result,  namely that under polyculture, all candidates have the same probability of getting a job, regardless of their objective values.}\textsuperscript{,}\footnote{This result is reminiscent of the Condorcet Jury Theorem. Suppose that everyone in a group has the same better than random chance of getting the right answer to a given yes-no question. And suppose that any one member getting the right answer is probabilistically independent of any other member doing so. Then, as the size of a group increases, the probability that its majority judgment is right goes to 1.}

The foregoing results sketch mechanisms by which polyculture \textit{might} outperform monoculture, but they do not conclusively establish that it \textit{will} do so. 
We raise three objections that limit the force of this wisdom-of-crowds-based
argument against monoculture.\footnote{The three objections we raise directly
  speak to the implications of monoculture for social welfare in aggregate. A
  separate objection relates to \textit{transfers} of welfare due to
  monoculture: even if polyculture is overall better than monoculture for firms,
  monoculture may be better for successful job candidates. Suppose that firms
  make offers simultaneously. Then, under monoculture, if a candidate gets an
  offer from one firm, they get an offer from all.  This means that any
  candidate who winds up with a job winds up with a job at their first-choice
  firm \citep[Thm. 2]{peng2024monoculture}. But this is not the case under
  polyculture. 

This echoes observations made about mechanisms for matching students with schools, given students' expressed preferences and schools' capacity constraints. In single tie-breaking (STB) systems, each school uses the same (randomly generated) ranking of students, while in multiple tie-breaking (MTB) systems, each school uses an independent randomly generated ranking of students. STB is analogous to monoculture, while MTB is analogous to polyculture. As \citet{peng2024monoculture} note, empirical and theoretical work on school choice indicates that STB results in more students being matched to one of their top choice schools than does MTB. See e.g., \citet{ashlagi2019assigning}. 

Monoculture's concentration of offers on fewer candidates may also enable successful candidates to bid up their wages. Having multiple offers gives you greater bargaining power, since it can be costly for the firm to find a replacement if you decline. (Multiple offers usually also give firms additional `higher-order' evidence of your quality, but this won't happen if all firms know that they are operating in a monoculture.) Having said that, most recent economics literature on the effects of multiple offers on bargaining concerns `on the job search,' or incumbent employees receiving outside offers \citep{cahuc2006wage}; it is unclear how much bargaining power unemployed workers get from having multiple offers.}
First, both models considered above assume that under polyculture, decision-makers make independent errors. This is implausible, regardless of whether polyculture involves human decision-makers or algorithmic ones. Research has shown that humans make
correlated errors, due to shared culture, systematic biases, and overlapping information \citep{tversky1982judgment}. For instance, hiring managers tend to overestimate the quality of candidates who are physically attractive \citep{beauty-labor-market} and to underestimate that of candidates who speak with a non-native accent~\citep{huang2013political,taveras2025foreign}.
Algorithms are also likely to make correlated errors, in part because they tend to be trained on overlapping datasets and have access to similar pieces of information about candidates, such as r\'{e}sum\'{e}s and  LinkedIn profiles \citep{bommasani2022picking}. Increasing the  correlation between algorithms will reduce the extent to which polyculture can take advantage of wisdom of crowds-type mechanisms and thereby outperform monoculture.\footnote{Compare the controversy around the analogous independence assumption that figures in the Condorcet Jury Theorem. For discussion, see \citet[ch. 5]{Goodin2018-GOOAET}.}

Second, the gap between monoculture and polyculture narrows significantly when we consider market forces.
\citet{peng2024monoculture} show that polyculture can be more accurate than monoculture in identifying the best candidates.
In their model, under monoculture, all firms are forced to use the same ranking over candidates.
Suppose, instead, that firms could decide for themselves whether to follow a shared ranking or create an independent one for themselves.
Would they still choose the shared ranking?
More generally, does monoculture perform worse than polyculture \textit{at equilibrium}?
As noted, \citet{kleinberg2021algorithmic} show that this is indeed possible: they construct scenarios where the algorithm available under monoculture is marginally more accurate than any \textit{individual} independent decision-maker, but leads to worse social outcomes if multiple firms use it.
Importantly, they show that their construction is an equilibrium, meaning that each firm rationally chooses to adopt the shared algorithm, even though it makes them collectively worse off---a form of Braess's paradox, related to the classic prisoner's dilemma.
Unlike the strong separation shown by \citet{peng2024monoculture}, \citet{kleinberg2021algorithmic} only show a modest gap in performance  (understood as the sum or average of the objective values of the candidates who get hired) between monoculture and polyculture, and only under particular conditions.

A generalization of \citet{kleinbergprice} provides a tight characterization of
just how bad monoculture can be when firms behave strategically. In their
setting, firms choose from a menu of ranking technologies with varying quality.
Some are independent, whereas others are shared. They show that the price of
anarchy (i.e., the multiplicative gap between equilibrium and optimal social
welfare \citep{koutsoupias1999worst}) is at most two; outside of
pathological instances, it can be much smaller than this. In other words, while
\citet{peng2024monoculture} suggest that monoculture can be arbitrarily worse
than polyculture from a social welfare perspective, if firms can
rationally choose whether or not to adopt a common technology, \textit{the
option to engage in monoculture} can only have modestly negative welfare. If firms benefit by seeking out independent sources of information, they will do so.

Third and finally, nothing prevents monoculture from being just as accurate, or even more accurate, than polyculture.
In principle, a single monoculture algorithm could integrate all of the information available to different polyculture algorithms, combining to form an ensemble that matches or even beats the performance of the overall polyculture system. For instance, in the model from \citet{peng2024monoculture}, we can take a given polyculture and turn it into an ensemble-based monoculture whose sole algorithm assigns each candidate the highest score that they receive from any of the different polyculture algorithms with which we started. Then, it follows immediately from their Theorem 1 that as the number of algorithms used to build the monoculture's ensemble algorithm increases, the probability that the ensemble-based monoculture hires only the best candidates approaches 1.

Even with a finite number of algorithms, an optimal ensemble-based
monoculture will outperform polyculture. Trivially, monoculture can match the
performance of polyculture simply by simulating it and exactly matching the
outcomes it yields. Monoculture might even perform better by aggregating
information more efficiently than the market does under polyculture. In the
model of \citet{peng2024monoculture}, polyculture effectively aggregates
information by using the maximum score that a candidate receives across all
firms to determine whether or not they receive a job. Efficient inference
depends on the underlying noise distribution, but generally, there are
lower-variance ways to aggregate multiple noisy
observations~\citep{fisher1925theory}.\footnote{For example, for observations with additive Gaussian noise, the optimal aggregator is the mean, not the max. For heavy-tailed noise distributions, the max provides no signal in the limit \citep{peng2024wisdom}.}

We confirm this intuition with simulations in both a sequential hiring setting like that of \citet{kleinberg2021algorithmic} and
a simultaneous hiring setting like that of \citet{peng2024monoculture}. In the former, firms move in a random order,
each hiring their top-ranked candidate from among those remaining. In the
latter, firms have 10 jobs each, candidates have random preferences over firms,
and candidates are stably matched to firms via the deferred acceptance
algorithm \citep{gale1962college}.\footnote{We use the version of the algorithm
in which candidates propose to firms, not \textit{vice versa}.}

In each run of the simulation, there are 1,000 candidates,  each with an objective value drawn independently from the standard Gaussian (or normal) distribution with mean 0 and standard deviation 1. Firms have preferences over candidates based on their noisy estimates of candidates' objective values. 

In (ordinary) monoculture, firms have a shared estimated value for each candidate, which is the candidate's objective value plus a single noise value for that candidate drawn from the noise distribution (Gaussian with mean 0 and standard deviation 0.5). In polyculture, firms have independent estimated values for each candidate, which are the candidate's objective value plus firm-specific noise values for that candidate, drawn independently from the noise distribution. Finally, in ensemble monoculture, firms have a shared estimated value for each candidate, which is the mean of the polyculture firms' estimated values for that candidate. 

We measured the performance of each hiring system in terms of the average
objective value of the candidates who are hired, normalized so that the
possible performance values fall between 0 and 1.\footnote{Specifically, let
Actual be the average objective value of the $n$ candidates hired under a given
system, Worst the average objective value of the $n$ worst candidates, and Best
the average objective value of the $n$ best candidates. Normalized performance
is the ratio (Actual - Worst)/(Best - Worst). So normalized performance is 0 if
the worst candidates are actually hired and 1 if the best candidates are
actually hired.} The results, for varying numbers of firms, averaged across
1,000 simulation runs, are shown in \Cref{fig:sequential,fig:simultaneous}. Standard errors are too small to show.
In all cases, the ensemble monoculture performed best, followed closely by
polyculture and somewhat further back by ordinary monoculture.
Intuitively, this is because the firms collectively had far more information
(more independent estimates of candidates' values) in polyculture and ensemble
monoculture than in ordinary monoculture. But in ensemble monoculture, this
information was pooled and all of it deployed in every hiring decision, whereas
in polyculture, only a small part of this collective information was deployed in
each decision.\footnote{Code and documentation for these and all other
simulations are available at \url{https://github.com/mraghavan/monoculture-bandits}.}

\begin{figure}[H]
    \centering
    \includegraphics[width=0.8\linewidth]{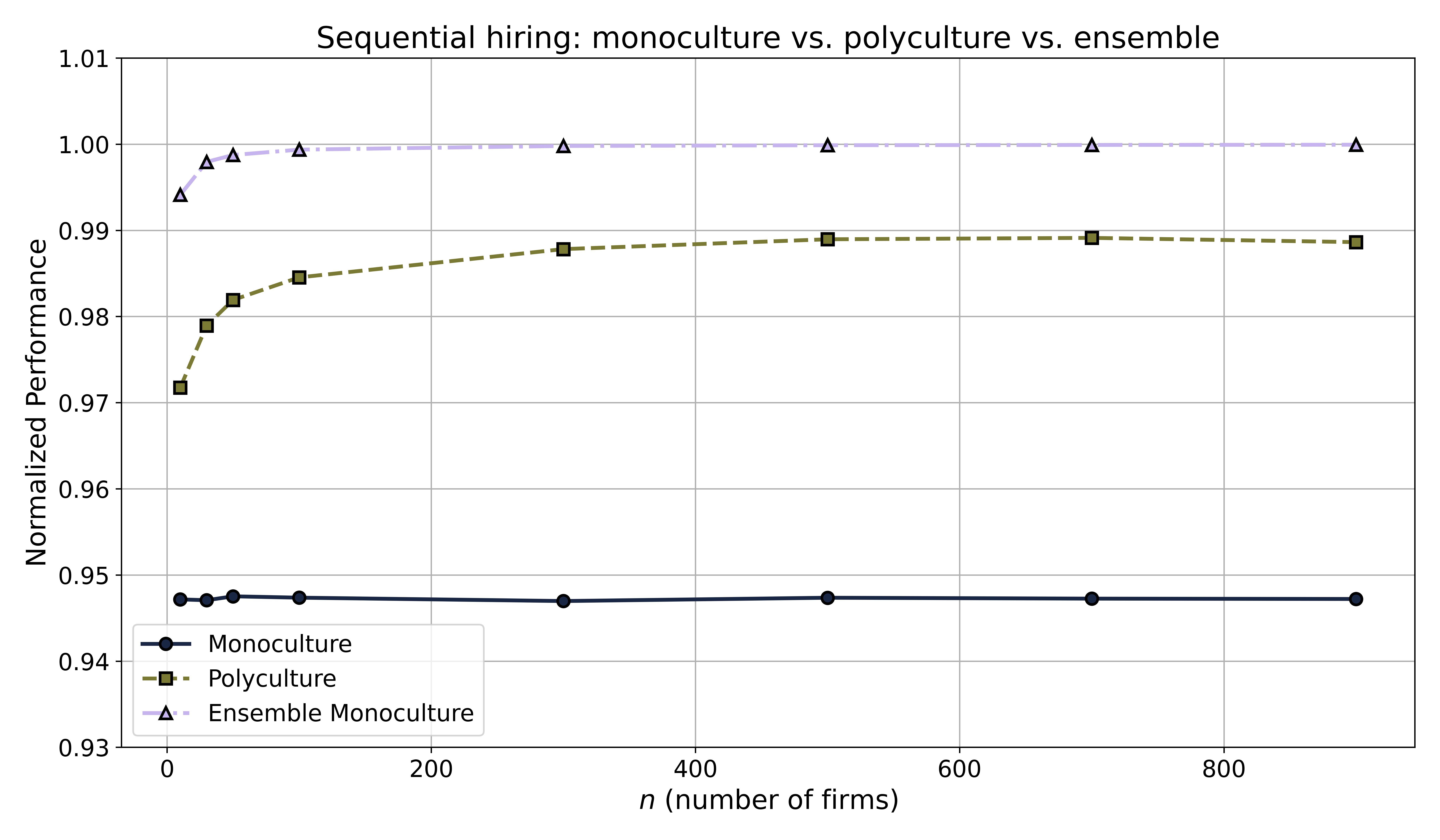}
    \caption{Performance in Sequential Hiring Setting.}
    \label{fig:sequential}
\end{figure}

\begin{figure}[H]
    \centering
    \includegraphics[width=0.8\linewidth]{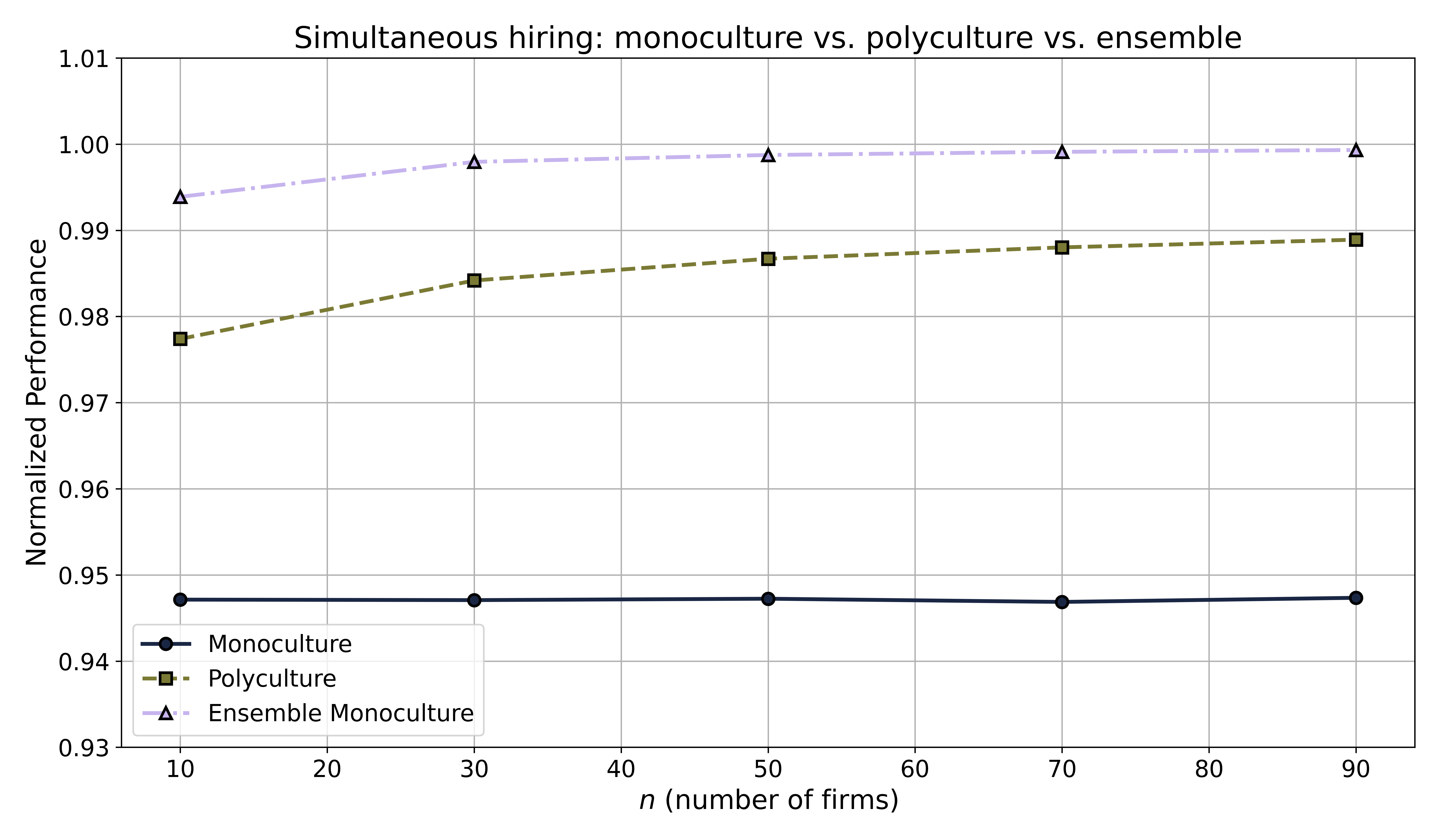}
    \caption{Performance in Simultaneous Hiring Setting.}
    \label{fig:simultaneous}
\end{figure}

One might question the relevance of the possibility of ensembling: Even though a monoculture based on a single ensemble algorithm could match or beat the performance of a given polyculture, doing so may require us to reproduce the work of each individual polyculture decision-maker. Collecting data, cleaning it, and building models is expensive.
Suppose that each polyculture algorithm uses overlapping but not identical input features. All algorithms may use a candidate's CV, but in addition, algorithm 1 might ask the candidate to take a personality assessment, while algorithm 2 scrapes the candidate's social media presence. Each of these may only be marginally useful relative to the other, and it may not be worth the effort for a central algorithm designer to incorporate both.
Thus, even though monoculture could in principle match or improve upon the behavior of polyculture, is it unlikely to do so in practice? 

Again, market forces should mitigate the impacts of monoculture here.
In competitive markets, firms can benefit by bringing new information to the table.
If all of a firm's competitors rely on a single algorithm that only has access to a small set of information, a firm can identify high-quality candidates who are overlooked by the monoculture by seeking out new information.
The results of \citet{kleinbergprice} support this intuition: As long
as firms can rationally choose whether or not to solicit new information,
societal outcomes cannot be dramatically worse under monoculture. And any
monoculture that is stable must be accurate enough to preclude upstart
competitors from being able to gain a significant edge via novel algorithms.

\subsection{Exploration}
\label{subsec:exploration}

A final objection to monoculture, again relating to information, is that it threatens to result in a failure to explore. While this objection has not been developed in the literature,\footnote{It is very briefly mentioned by \citet{jain2024algorithmic} in their defense of `algorithmic pluralism.' They write that `many algorithms will undervalue applicants from under-represented groups and fail to learn about changes in applicant hiring potential over time. This should incentivize employers to not strictly follow algorithmic rankings and balance exploitation (selecting from groups with proven track records) with exploration (selecting from other groups to learn about quality.' See also \cite{Kitcher1990-KITTDO} and \citet{hong2004groups} for arguments that cognitive diversity improves exploration in science and other domains.} we think it is perhaps the most compelling. Algorithms often evaluate and choose between items with unknown values. For example, a hiring platform like LinkedIn must rank candidates in response to a search, based on predictions about whether each candidate is interesting to recruiters. Similarly, platforms like Netflix rank movies and TV shows based on estimates of their quality. But platforms also make mistakes. If the first few recruiters happen to scroll past a candidate's profile without engaging, the platform may take this as a signal that the candidate is of low quality and stop showing them in the search results. In an effort to return the best candidates for each search, platforms will ``exploit'' their knowledge of other high-quality candidates without ``exploring'' the candidates for whom their initial estimates may be mistaken. As a result, there may be many high-quality candidates whom the platform never gives enough of a chance to show their value, leading to an overall social failure to discover good candidates.

 We might imagine that a system without centralized information could avoid this
 pitfall. If recruiters relied more heavily on career fairs and conferences,
 they might be more likely to come across different candidates. Of course, such
 a system may be inefficient for a variety of reasons---each individual actor
 would only have a fraction of the overall information available to them. But it
 is precisely this lack of information that would lead them to experiment and
 discover unknown or unproven candidates.

How does this bear on the debate over monoculture and polyculture? It is
tempting to think that polyculture will yield more exploration than monoculture,
since under monoculture, an option which is ignored by one algorithm will be
ignored \textit{simpliciter}, whereas under polyculture, an option which is
ignored by one algorithm may wind up being chosen by another. The objection to
monoculture, then, is that it threatens to yield a suboptimal balance between
exploitation (doing what looks best given present knowledge) and exploration
(trying new things which might be even better). In particular, it does too much
of the former and too little of the latter. 

To assess this objection, we use the setting of multi-armed
bandit problems, which are standardly employed in computer science to
model and investigate trade-offs between exploration and exploitation (for an
overview, see \citet{slivkins2019introduction}). The name comes from `one-armed
bandit,' a euphemism for slot machines. In a multi-armed bandit problem, there
are $k$ `arms,' each with some unknown reward distribution. At each time $t=1,
2, \dots , T$, an agent selects one arm to pull and gets some reward. Crucially,
the agent only observes the reward of the arm selected; they get no information
about what rewards the other arms would have yielded. To maximize total reward
over time, agents must balance exploitation---pulling the arm with highest
expected reward, given the information available up until then---with exploration---pulling arms which are more uncertain and might be even better.

Optimal algorithms for bandit problems strike a balance between exploration and exploitation. One example is the Upper Confidence Bound (UCB)
algorithm~\citep{lai1985asymptotically, agrawal1995sample, auer2002finite}, which ranks
arms by an upper confidence bound on their expected rewards. The UCB algorithm thereby embodies the principle of `optimism in
the face of uncertainty,' giving a boost to arms about which there is greater
uncertainty.\footnote{Of particular relevance to our concerns here,
\citet{li2020hiring} model hiring decisions as bandit problems and compare
different algorithms using interview and hiring data from a Fortune 500 company.
They found that a UCB algorithm selected a more demographically diverse set of
candidates than did various ``greedy'' algorithms, which simply ranked
candidates by their estimated quality, intuitively because there was greater
uncertainty about candidates from minority backgrounds.}

From this work, we know that a monoculture \textit{could} engage in plenty of
exploration, namely if its sole algorithm was UCB or a similar algorithm. This
is an instance of a more general point: In theory, monoculture can generate
whatever result we want, provided we choose the right algorithm. But in
practice, there is no social planner who can just `choose the right algorithm.'
Instead, we must consider how individuals or firms, with their own goals and incentives, will
interact with a monoculture.

We worry that a monoculture deploying a UCB-style algorithm might be infeasible,
since such an algorithm might not be aligned with users' interests and
incentives. After all, algorithms that explicitly balance exploration and
exploitation must sacrifice expected utility today in order to learn, and
thereby better serve users tomorrow. For this reason, we consider what happens
when all agents choose \textit{greedily} or \textit{myopically}, pulling the arm with highest
instantaneous expected utility given their information, without any attempt to
explicitly explore. A platform that shows a recruiter the ``best'' available
candidates, according to its current knowledge, may be appropriately modeled as
greedy. Similarly, an agent acting on a user's behalf to find the
``best'' Mexican restaurant or mid-sized SUV should be understood as acting
greedily: No-one wants to purchase a car that is likely suboptimal in order to
gather information that might help future customers.

We already know that such greedy or myopic choice can cause a collective failure to learn. If some algorithmic system just seeks to maximize each  user's instantaneous expected utility (e.g.,~by recommending the most popular restaurants, products, or music), it will typically fail to explore and yield worse outcomes and information in the
long run~\citep{auer2002finite}.\footnote{As we will discuss later, most modern
  platforms do engage in \textit{some} amount of exploration, but optimal
explore-exploit algorithms are not commonly used in practice.} This qualitative
concern---algorithmic monoculture leading to under-exploration---arises
frequently in
the literature across news and journalism~\citep{nechushtai2019kind},
recommender systems~\citep{smets2022serendipity}, and
culture~\citep{chayka2025filterworld}.  While this is true regardless of whether the algorithm(s) in question form a polyculture or a monoculture, we consider whether  monoculture might \textit{exacerbate} the extent to which greedy behavior leads to societal exploration failures. Polyculture might outperform monoculture simply because independent algorithms choose differently, yielding greater collective exploration as a byproduct.

We can now describe the difference between monoculture and polyculture in the
context of multi-armed bandits. We model monoculture in terms of shared information and polyculture in terms of information siloes \citep[e.g.,~][]{salganik2006experimental} which progress independently. Specifically, in monoculture, the agent who arrives at time $t$ observes the actions and rewards of all agents at times
$\tau < t$. In polyculture, the agents are partitioned into $k$ groups (corresponding to $k$ independent algorithms), and the agent at time $t$ only observes actions and rewards for prior individuals in the same group (i.e. using the same algorithm).

The objective we consider is societal information production, which we
operationalize as whether society collectively produces enough information to
correctly identify the best arm. In standard multi-armed bandits with greedy or myopic
individuals, there is some non-zero probability of such a failure
\citep{banihashem2023bandit}. Intuitively, if the best arm has an initial
``unlucky'' run of low rewards, greedy agents will have no incentive to
continue to explore it.

We formalize this setup as follows.
At each $t = 1, 2, \dots, T$, an individual arrives and must select between arms
1 and 2, each of which yields rewards from a Bernoulli distribution with
(unknown) parameters $\mu_1, \mu_2$ respectively.
Following \citet{banihashem2023bandit}, we assume a completely frequentist setting, where everyone observes a common initial sample of $N_0$ reward samples from each arm before $t=1$.
Similar results hold in a Bayesian setting where everyone has a common Beta prior.
Each individual chooses greedily, maximizing frequentist expected utility; that is, they simply choose the arm with the highest observed average reward. As noted above, the only distinction between monoculture and polyculture is the information structure: Under monoculture, each individual observes the choices and rewards from all prior individuals, whereas under polyculture, we have $k$ independent sets of individuals with no shared information across them (other than the $N_0$ initial samples that everyone observes).

Finally, we define failure to learn as follows.
After $T$ individuals have acted, the total information produced by society is just the initial $N_0$ samples per arm plus the $T$ observed rewards.
Critically, the distribution of those $T$ rewards across the two arms depends on the rewards themselves.
It is possible for only a small number of them to be from one arm or the other.
Our definition of success is simple: Do we correctly identify the best arm?
That is, if $\mu_1 > \mu_2$, does arm 1 perform better than arm 2 in the observed sample?

\begin{figure}[ht]
    \centering
    \includegraphics[width=0.8\linewidth]{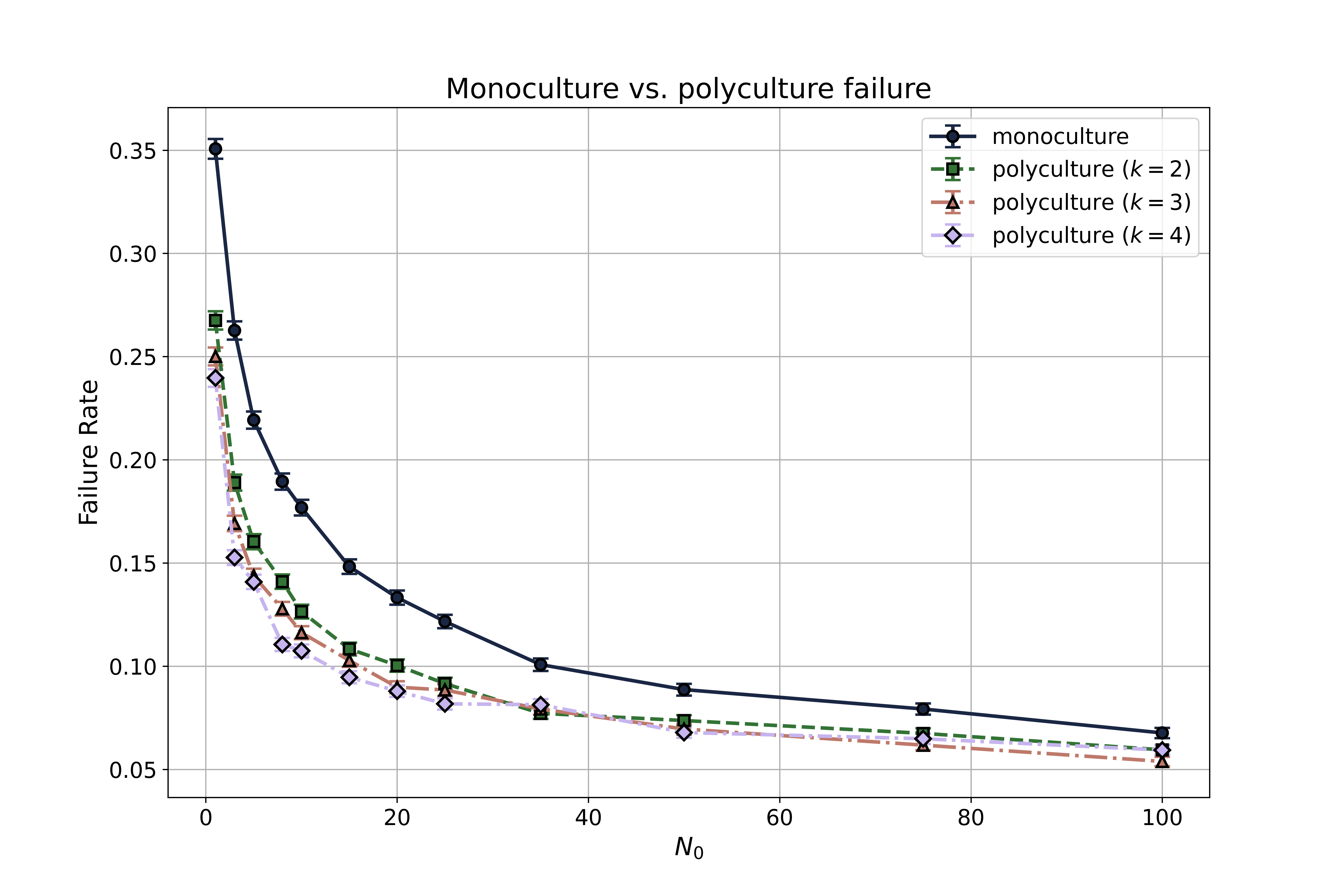}
    \caption{Failure to Identify Best Arm in Monoculture and Polyculture.}
    \label{fig:bandit-failure}
\end{figure}

\Cref{fig:bandit-failure} shows that monoculture consistently performs worse than polyculture across different values of $N_0$.
In simulations over a time horizon $T=1000$ with mean rewards $\mu_1, \mu_2$ drawn independently from a Beta$(2, 2)$ prior, having more independent polyculture algorithms (larger $k$) leads to lower rates of failure to identify the best arm.
At least in this example, our intuition is borne out: polyculture mitigates the failure to explore.\footnote{See \citet{zollman2010epistemic} for related work on epistemic network structure and learning in bandit problems.}

\Cref{thm:main} states that this is true more generally.
For sufficiently long time horizons, polyculture identifies the best arm more often than monoculture, regardless of how $\mu_1, \mu_2$ are chosen.

\begin{theorem}[Informal]
  \label{thm:main-informal}
  For sufficiently large $T$, monoculture is more likely to fail to identify the best arm than polyculture.
\end{theorem}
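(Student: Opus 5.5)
The plan is to compare the $T\to\infty$ limits of the two failure probabilities, show that the limits are strictly ordered, and then use convergence to get the ordering for all sufficiently large finite $T$. Throughout I take $\mu_1>\mu_2$, fix $N_0$ and $k$, and let each of the $k$ polyculture silos receive $T/k$ agents. Let $S$ denote the common initial sample of $N_0$ rewards per arm. The proof is organized around conditioning on $S$: given $S$, the $k$ silos are i.i.d. copies of the greedy process, and the monoculture is itself one such process run for $T$ steps instead of $T/k$.

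First, I analyse a single greedy process started from $S$ and run forever. Almost surely at least one arm is pulled infinitely often. I would show that if both arms are pulled infinitely often, both empirical means converge to the truth. If only arm $1$ is pulled infinitely often, its empirical mean tends to $\mu_1$, and greedy stopping on arm $2$ forces arm $2$'s frozen estimate below it. The only bad limiting configuration is therefore \emph{lock-in on arm $2$}: arm $1$ is pulled finitely often, with a frozen estimate $\hat\mu_1\le\mu_2$, while arm $2$'s running mean converges to $\mu_2$. Ties need separate bookkeeping, which I would handle with the paper's tie-breaking rule. Let $p(S)$ be the conditional probability of lock-in on arm $2$ that ends with a strict misranking. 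Then the limiting monoculture failure probability is $\mathbb{E}[p(S)]$.

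Second, I handle the pooled polyculture data. If any silo pulls arm $1$ infinitely often, the pooled estimate of arm $1$ tends to $\mu_1$. Every silo that does not do so locks in on arm $2$, pulls arm $2$ infinitely often, and drives the pooled arm-$2$ estimate to $\mu_2$. In that case the limit identifies arm $1$ correctly. If no silo pulls arm $2$ infinitely often, then every silo has arm $2$'s frozen estimate below arm $1$'s running mean, and I would argue that pooling preserves this ordering. So limiting polyculture failure requires every silo to lock in on arm $2$ with the frozen arm-$1$ estimate below $\mu_2$. I expect this to make the limiting polyculture failure probability at most $\mathbb{E}[p(S)^k]$ (indeed equal to it). Since $p(S)\in[0,1]$, we have $p(S)^k\le p(S)$. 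The inequality is strict whenever $p(S)\in(0,1)$, and I would show this has positive probability. Take for example an $S$ in which arm $1$'s sample mean lies strictly below $\mu_2$. Lock-in then has positive probability, because arm $2$'s running mean stays above a fixed level below its mean with positive probability (a random walk with positive drift never returning). Escape also has positive probability, because a finite run of zero rewards on arm $2$ drops its estimate below that of arm $1$. Hence $\mathbb{E}[p(S)^k]<\mathbb{E}[p(S)]$ for $k\ge 2$.

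Finally, I would show that the finite-$T$ failure probabilities converge to these limits. This requires that the events ``arm $i$ pulled infinitely often'' are approximated by ``arm $i$ pulled at least $m(T)$ times'', with empirical means concentrating via Hoeffding. I expect this step, together with the shared initial sample and ties, to be the main obstacle. The conditional independence of silos given $S$ only helps once I have uniform control of the convergence over the finitely many values of $S$, which is available because $S$ takes finitely many values. Given the strict gap between the limits, the finite-$T$ ordering follows for all sufficiently large $T$.
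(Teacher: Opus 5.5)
Your overall plan matches the paper's. You condition on the shared initial history and use the lock-in trichotomy for a single greedy run. You identify the limiting monoculture failure probability with $\mathbb{E}[p(S)]$ and bound the limiting polyculture failure probability by $\mathbb{E}[p(S)^k]$. Strictness then comes from $p(S)\in(0,1)$ on a positive-probability set of histories.

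The gap is the all-right case, where you say you ``would argue that pooling preserves this ordering.'' This is where the paper does its real work, and the step is not automatic. Silo-wise orderings need not survive aggregation (Simpson's paradox). Worse, the pooled arm-2 estimate is not even a weighted average of the silo estimates: the shared initial sample enters the pooled data once but enters every one of the $k$ silo estimates. For example, take $k=2$, $N_0=10$, $S_2(0)=0$, and let each silo have $9$ successes in $10$ pulls of arm 2. Each silo's frozen estimate is $9/20<1/2$, yet the pooled estimate is $18/30=3/5$. If $\mu_1=1/2$, each silo ranks the arms correctly in the limit while the pooled observer does not.

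What rules this out is a greedy-specific fact your sketch never uses: the smaller of the two empirical means never increases (Lemma~\ref{lem:greedy-min}). Once a silo has locked onto arm 1, its frozen arm-2 estimate is the smaller mean, hence at most $S_2(0)/N_0$. So that silo's own arm-2 rewards have sample mean at most $S_2(0)/N_0$. Counting the shared sample once rather than $k$ times can then only lower the estimate. The pooled arm-2 estimate is therefore at most the naive ratio $\bigl(kS_2(0)+\sum_i Z_2^{(i)}\bigr)/\bigl(kN_0+\sum_i n_2^{(i)}\bigr)$ (Lemma~\ref{lem:abcd} with $\alpha=1/k$). Summing the per-silo inequalities $S_2(0)+Z_2^{(i)}<\mu_1\bigl(N_0+n_2^{(i)}\bigr)$ puts this ratio below $\mu_1$.

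Those per-silo inequalities must be strict, which is a second point you skip. ``Greedy stopping forces arm 2's frozen estimate below'' arm 1's running mean only yields $\hat\mu_2^{(i)}\le\mu_1$ in the limit. If the pooled arm-2 estimate could equal $\mu_1$, the pooled arm-1 estimate would keep fluctuating around it, and the failure probability would not tend to $0$. Strictness holds almost surely for the following reason. Lock-in on arm 1 requires arm 1's running mean to stay at or above the frozen arm-2 value forever. But the centered walk $D_m=\sum_{n\le m}(X_{1,n}-\mu_1)$ satisfies $\liminf_{m\to\infty} D_m=-\infty$ almost surely. The paper obtains this from the law of the iterated logarithm (Lemma~\ref{lem:lock-in-upper-bound}).

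Conversely, the step you single out as the main obstacle is routine and needs neither Hoeffding bounds nor thresholds $m(T)$. For each fixed $S$, off a null set, the failure indicator at horizon $T$ converges to the indicator of the limiting failure event. Bounded convergence then gives convergence of the conditional probabilities, and averaging over the finitely many values of $S$ commutes with the limit. The paper does essentially this: it truncates at $\{L<t/2\}$, uses that lock-in occurs almost surely (Lemma~\ref{lem:lock-in}), and applies monotone and dominated convergence.
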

\noindent We provide a formal statement and proof in Appendix~\ref{app:bandits}.

Should we expect to see exploration failures under monoculture in practice?
The argument that we should is intuitive: Monoculture may fail to identify some good candidate, but each polyculture algorithm has an independent chance of identifying them.
More independent chances at success yields a higher overall likelihood of success.
Each polyculture algorithm does its own implicit exploration, which leads to a better explore-exploit balance.
Moreover, qualitatively similar results hold in different models like sequential social learning~\citep{banerjee1992simple,bikhchandani1992theory}.

And yet, there are several practical considerations that blunt the force of this
objection. First, modern platforms do not engage in pure exploitation. Google
does not simply return the search results of highest estimated quality. Instead,
it experiments (or ``explores'') by showing results of lower estimated quality
precisely because exploration prevents information failures. Netflix, LinkedIn,
and most major online platforms do the same. In principle, users might prefer a platform, algorithm, or agent that maximizes their
individual expected utility instead of subjecting them to experiments. In
practice, this market force does not appear to be strong enough to force pure exploitation.

Moreover, even in a world with no explicit exploration, the bandits literature details a variety of conditions under which algorithms still acquire sufficient information.
Even small amounts of randomness or heterogeneity in the population suffice to yield exploration ``for free''~\citep{bastani2021mostly,kannan2018smoothed,raghavan2023greedy}.
Thus, if people have sufficiently heterogeneous preferences, or if a fraction of the population doesn't strictly follow the algorithm's recommendations, monoculture can still succeed.
Similarly, as long as a small fraction of the population is risk-seeking, they provide enough exploration to serve the entire population~\citep{banihashem2023bandit}.\footnote{Risk-averse users, by contrast, provide no benefit with respect to exploration \citep{banihashem2023bandit}.}

Finally, in the setting of \Cref{thm:main-informal}, any agent can pull any arm, no matter what the other agents do. This may accurately model some real-world settings where we might be concerned about algorithmic monoculture, and in these cases, the theorem gives a strong reason to favor polyculture. But it doesn't accurately model settings like hiring, where no candidate can be hired by multiple firms at the same time. Even if firms agree on the ranking of candidates, they cannot all hire their top-ranked candidate. Most, indeed all but one, have to go further down the list. In this way,  hiring involves externalities, whereby the decision of one firm affects the options available to others. And with a fixed unemployment rate, same number of candidates will be hired (and hence ``explored'') in any given round, regardless of whether we have monoculture or polyculture. Do the externalities inherent in the hiring setting force upon us sufficient exploration regardless of whether we have monoculture or polyculture?

To shed light on this question, we consider the following model.\footnote{See also \citet{badanidiyuru2018bandits} for a related type of bandit problem\textemdash bandits with knapsacks\textemdash in which the pulling of an arm yields externalities which must be taken into account. Theirs is a single-agent model where the agent has limited resources (time, money, etc), which get depleted as they pull more arms.} There are $n$ agents (the firms), $k>n$ arms (the candidates), and $t$ rounds. When an arm is pulled by an agent in a given round, this represents the corresponding candidate being hired by the corresponding firm for that round. The agents move in some order, each pulling their top-ranked arm from among those which have not yet been pulled in that round. All agents are greedy, pulling the arm with highest expected reward, given their information at the time. At the end of each round, all agents learn which arms were pulled by which agent, as well as what rewards they yielded, and they update in the usual Bayesian way.\footnote{One might object that it's implausible that all agents would learn about the rewards yielded by arms that \textit{other} agents pulled. Firms typically wouldn't publicly disclose the performance evaluations of their employees. We make this idealizing assumption for the sake of analytical and computational tractability. Without it, the actions of any given agent would provide the others with evidence about their belief state, which is in turn based on not only the rewards they directly observed, but also on inferences about other arms' rewards based on the actions of various other agents, and so on.}

In monoculture, agents have a shared prior for each arm's reward distribution, while in polyculture, they have independent priors.\footnote{We have already noted (\S\ref{subsec:aggregation}) that this independence assumption is implausible, but we make it here to ensure that we are considering a maximally attractive form of polyculture.} These priors are generated by giving agents some initial samples from each arm. In monoculture, agents are all given the same initial samples, while in polyculture, they are given distinct sets of initial samples.\footnote{Specifically, each arm $i$ gives reward 1 with probability $\mu_i$ and reward 0 otherwise, with each $\mu_i$ drawn independently from a Beta(2,2) distribution. An agent's prior for each arm's $\mu_i$ is generated by updating Beta(2,2) with $N_0=5$ initial samples from that arm. In monoculture, each agent's prior is generated using the \textit{same} $N_0$ initial samples from each arm. In polyculture, each agent's prior is generated using an \textit{independent} set of $N_0$ initial samples.} Because hiring order matters in polyculture (but not in monoculture), we  consider two different polyculture setups: one in which the hiring order is fixed throughout, and another in which the hiring order is randomized at the start of each round. 

We focus on two summary statistics of interest. The first is Total Bayesian Regret: the difference between (i) the total expected\footnote{Expectations here are relative to the arms' true reward distributions, not agents' beliefs.} reward (summed across all agents and all rounds) if all agents knew the arms' true reward distributions\footnote{This is the number of rounds times the sum of the means of the reward distributions of the best $n$ arms.} and (ii) the total expected reward (summed across all agents and all rounds) of the arms that actually got pulled. This is a measure of how suboptimal the collective performance of the agents was. The second statistic, which parallels the one considered in the context of \Cref{thm:main-informal}, is the number of arms that would be misclassified as being among the top $n$ by an impartial observer with access to all of the information available to any of the agents at the end of the rounds.\footnote{Our impartial observer starts off with Beta(2,2) priors for each arm, and then updates on all of the various agents' $N_0$ initial samples, as well as all actions and rewards for all rounds.} This is a measure of the extent to which the system generates enough information over the course of all of the rounds for society to then know which arms ``should'' be pulled. 

Results for simulations with $k=100$ arms, $T=200$ rounds, and various numbers
of agents or firms, averaged over 1,000 runs of the simulation, are shown in
\Cref{fig:regret,fig:misclassification}. Standard errors are too
small to show.

\begin{figure}[H]
    \centering
    \includegraphics[width=0.8\linewidth]{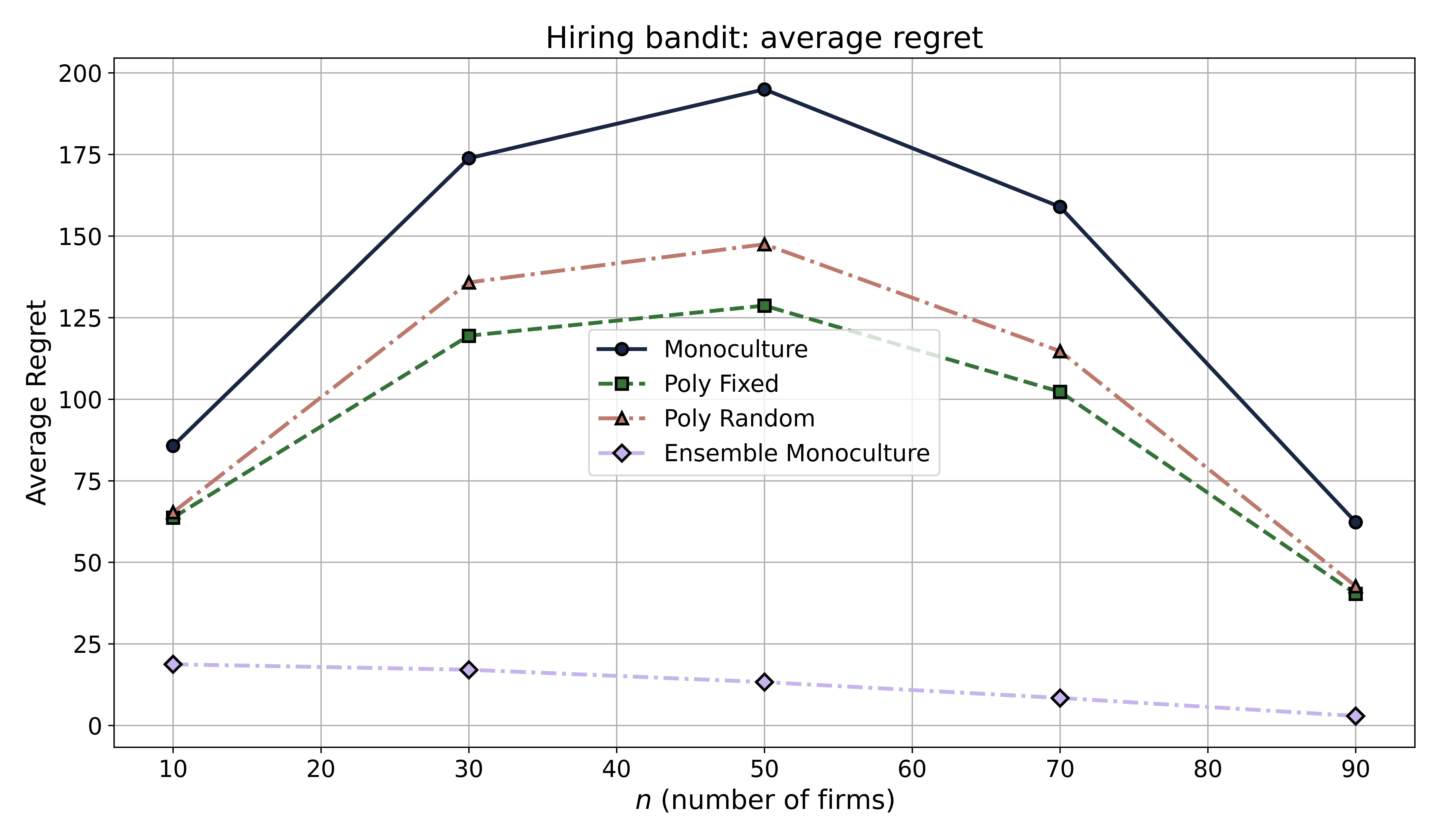}
    \caption{Total Bayesian Regret.}
    \label{fig:regret}
\end{figure}

\begin{figure}[H]
    \centering
    \includegraphics[width=0.8\linewidth]{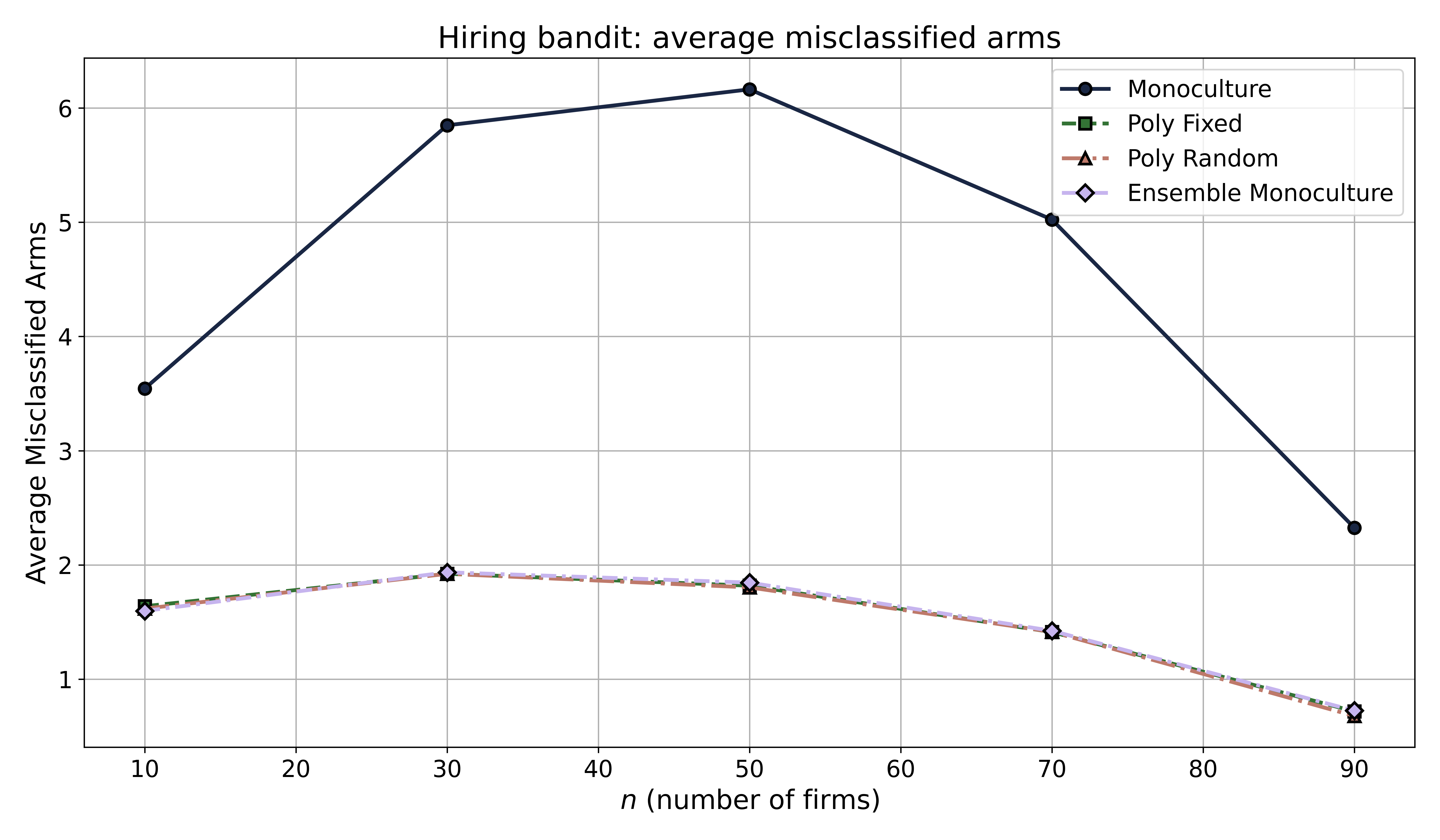}
    \caption{Number of Arms Misclassified by Impartial Observer.}
    \label{fig:misclassification}
\end{figure}

Polyculture\textemdash whether fixed-order or random-order\textemdash outperformed monoculture with respect to both Total Bayesian Regret and with respect to the number of arms misclassified by an impartial observer. This was true for all the different numbers of agents we considered. 

We take these simulations to show that even though the externalities associated with hiring force a certain amount of exploration, polyculture still performs better than monoculture by striking a more optimal balance of exploration and exploitation. At least, this is so when the algorithm deployed under monoculture is on a par with each of the various algorithms deployed under polyculture.

However, as we have seen, we can convert a given polyculture into a monoculture
through ensembling. So we also consider an ensemble monoculture setting in which
agents have a shared prior, derived from polyculture by pooling together all the
information initially available to \textit{any} of the agents in that
polyculture.\footnote{Specifically, in ensemble monoculture, each agent's prior
for a given arm's reward distribution is generated by updating Beta(2,2) on
\textit{all} of the initial $N_0$ samples seen by \textit{any} agent in the
polyculture of that run of the simulation. This models a setting where a
centralized platform observes all of the information previously used to form
each polyculture run's prior beliefs.} So, in ensemble monoculture and the
polyculture from which it was derived, the agents \textit{collectively} had the
same information. But that information was distributed in polyculture but pooled
in ensemble monoculture. Results for ensemble monoculture are also  shown in
\Cref{fig:regret,fig:misclassification}.

For every number of agents considered, ensemble monoculture achieved \textit{significantly} lower Total Bayesian Regret than not only ordinary monoculture, but also both forms of polyculture. And it performed essentially equally well as the two polyculture systems with respect to the number of arms that would be misclassified by an impartial observer. 

These results parallel the point made in \S\ref{subsec:aggregation}, namely that monoculture performs worse than polyculture if the former's lone algorithm is on a par with each of the latter's many algorithms. But if we turn a given polyculture into a monoculture through some sort of ensembling, the resulting monoculture can do even better than the original polyculture.

\section{Conclusion}
\label{sec:discussion}
Critics of algorithmic monoculture regard it as harmful and inefficient, or even dystopian. To evaluate their concerns, we considered several objections to monoculture: that it threatens to systemically exclude certain people from valuable opportunities such as employment, that it creates poor incentives and thereby diminishes or subverts people's agency, and that it is suboptimal from an informational perspective, failing to take advantage of `wisdom of crowds'-style mechanisms and striking a bad balance of exploration and exploitation. We find the first objection unconvincing. And while the other objections may cut against some forms of monoculture, other forms escape largely unscathed. Monoculture may not be such a bad thing after all. 

We highlight two themes of our discussion. First, monoculture may indeed be worse than polyculture when the sole algorithm deployed under monoculture is of a piece with the various algorithms deployed under polyculture, i.e. roughly as accurate, sophisticated, and so on.
In principle, some of monoculture's deficiencies might be remedied by an
\textit{ensembling} approach, in which the many algorithms which might be deployed independently in polyculture are instead packaged together into a single one through averaging or some other amalgamation mechanism. Indeed, an ensemble-based monoculture can outperform a corresponding polyculture, as our simulations illustrate. In practice, however, such ensembling may be infeasible---for example, in order to create an ensemble algorithm based on the polyculture emergent from idiosyncratic hiring managers across firms, we would need to solicit the opinions of each hiring manager, which might defeat the purpose of building an algorithm in the first place. Still, there may be ways for a monoculture to use randomization or personalization to bring back some of the benefits of polyculture.\footnote{F or discussion of randomization, see \citet{jain2024scarce}). Note also that the idealized form of polyculture we've considered\textemdash one where the algorithms are uncorrelated, conditional on candidates' objective values\textemdash is unlikely to be realized in practice. As a polyculture's algorithms become more correlated, its informational advantages over ordinary monoculture will diminish.} 

Second, the externalities inherent in the hiring setting enable monoculture to avoid certain potential harms. Firms have a roughly fixed number of positions that they want to fill, and candidates can only accept one job offer. This prevents monoculture from being more exclusionary than polyculture, as roughly the same number of candidates will be left jobless under either system. It also forces monoculture to engage in a substantial amount of exploration; firms can’t
all hire the top-ranked candidate, so many firms will have to hire and thereby ‘explore’ lower-ranked candidates.

Similar externalities are present in the cases of lending and college admissions. To a first approximation, creditors have a certain amount of money that they want to lend, and borrowers can only take on a certain amount of debt. (E.g., once you take out one mortgage on your house, you typically can't take out another one on that same house.) Similarly, colleges have a certain number of places, and students can only attend one college at a time. So we should also not expect algorithmic monoculture to exclude more people from credit or college than polyculture. And we should expect monoculture to still explore borrowers or college applicants regarded as more uncertain. 

What about other settings which do not involve such externalities? As a toy example, consider film awards. There is nothing prohibiting all awards shows---the Oscars, Cannes, the BAFTAs, etc.---from giving the best actor award to the same person. So strict monoculture, where all award committees use the same ranking and hence give their award to the same actor, would indeed likely be more exclusionary than polyculture.\footnote{Of course, whether this is objectionable here depends on the details. If there is a fact of the matter about which actor was best, then perhaps that actor should receive all the awards. But it might be better to distribute awards more evenly, insofar as awards have diminishing marginal utility for recipients. }

But even though nothing \textit{prohibits} awards shows from all giving the best actor award to the same person, competitive pressures would likely mitigate the exclusionary effects of monoculture. If everyone knew that awards shows were all the same, people would only watch the first one. So to continue to be relevant, shows later in the cycle will want to at least consider doing something different. Even if they all used the same algorithm to come up with an initial ranking of actors, they would be incentivized to use this common, shared signal in different ways, which introduces subtle game-theoretic considerations. 

Something like this may happen with AI used to guide drug discovery. Even if all pharmaceutical firms used the same algorithm to decide which sorts of compounds are the most promising, they wouldn't want to herd around the same top-ranked compound, since only one firm can patent and market any given compound. So we should expect firms to use the shared signal provided by that algorithm in different ways. Something similar can arise in the use of generative AI, where users must balance their desire to produce the `best' content with the need to differentiate themselves from their competitors. See \citet{raghavan2024competition} for game-theoretic analysis of monoculture in generative AI. 

Much of our analysis has focused on domains that resemble hiring, where there are many decision-makers (the firms) with a fixed number of opportunities (jobs) to allocate, and where these opportunities are exclusive (since no one can hold multiple jobs at once). We leave it to future work to study the impacts of monoculture in other domains like medicine and criminal justice, which have importantly different structures.\footnote{See \citet{angwin2016machine} for discussion of criminal recidivism prediction algorithms.} Our discussion has also been theoretical, focusing on models that idealize away from various real-world complexities. For instance, firms don't all advertise their jobs at the same time, candidates don't apply to all firms, and hiring markets have friction and congestion, which can affect outcomes in complex ways \citep[see e.g.,][]{baek2025hiring}. 

We encourage further theoretical work that models these and other real-world complexities, as well as empirical work, ideally involving `natural experiments' with shifts toward or away from monoculture.  Still, we have shown that monoculture is not the inevitable disaster its critics fear; it is a type of structure which allows for refinement and improvement, e.g., through ensembling. Realizing its potential benefits, however, requires a commitment to intentional design and continued study of its consequences, both good and bad.

\newpage

\bibliographystyle{plainnat}
\bibliography{monoculture.bib}
\newpage 

\appendix
\section{Polyculture vs. Monoculture in Multi-Arm Bandits}
\label{app:bandits}

\paragraph*{Notation.}

We'll consider a two-arm bandits setup, where the arms have mean rewards $\mu_1$ and $\mu_2$ respectively.
Assume without loss of generality that $\mu_1 > \mu_2$.
Rewards are binary, so when arm $j$ is pulled, the realized reward is 1 with probability $\mu_j$ and 0 with probability $1-\mu_j$.

Rounds proceed sequentially over timesteps $t = 1, 2, \dots, T$.
As is standard, we will assume independence between random events at time $t$ and all randomness at times $\tau < t$.\footnote{We could make this more formal by defining a filtration $\mc F_t$ over the relevant $\sigma$-algebra. For clarity, we omit these standard definitions.}
Under these assumptions, we can specify the full distribution over rewards as follows.
Consider two infinite \textit{reward schedules} $\{X_{1,n}\}_{n=1}^{\infty}$ and $\{X_{2,n}\}_{n=1}^{\infty}$, where $X_{j, n}$ is the reward for arm $j$ on the $n$th time it is selected.
Because of our independence assumption, the $X_{j, n}$'s are independent and identically distributed as Bernoulli random variables with parameter $\mu_j$.
We will refer to the combination of these two reward schedules as $\mc X$, and $\PX$ denotes its probability law (i.e.,~ independent Bernoullis).

The final source of randomness in our setup comes from an initial set of observations that all agents have access to.
Equivalently, we could model this as a common prior; instead, we choose to study a fully frequentist setting, where an initial shared history $\hist$ effectively takes the role of the prior.
Following \citet{banihashem2023bandit}, we assume $\hist$ consists of $N_0$ samples drawn iid from each arm.
Let $S_j(0)$ be the sum of those samples, i.e.,~the number of ``heads'' observed for arm $j$ in that initial set of samples $\hist$.
Intuitively, larger $N_0$ values correspond to stronger and more informative priors.
We'll use $\PH$ to refer to the probability law over $\hist$.
As we'll show, behavior in our model will be deterministic given $\hist, \mc X$.

We study a setting with $k$ agents. Each agent will run a frequentist greedy bandit algorithm, meaning they simply select the arm with the highest observed empirical mean so far.
More formally, let $n_j(t) = n_j(t; \hist, \mc X)$ be the number of times arm $j$ has been selected up to and including time $t$.
Let $S_j(t) = S_j(0) + \sum_{n=1}^{n_j(t)} X_{j,n}$.
The empirical mean for arm $j$ at time $t$ is $\hat \mu_j(t) = \frac{S_j(t)}{N_0 + n_j(t)}$.
We'll let $A_t = A_t(\hist, \mc X)$ denote the arm chosen at time $t$.
Let $\PH$ be the distribution of $\hist$, let $\PX$ be the distribution of
$\mc X$, and let $\Pboth$ be their joint distribution.

Our goal is to model both monoculture and polyculture in this setting.
We do so via the information structure.
Under monoculture, the agent at time $t$ can observe all prior actions and rewards.
Under polyculture, this agent can only observe their own actions and rewards but not those of other agents.
Thus, under our model, monoculture is equivalent to a single bandit run, while polyculture is simply $k$ independent bandit runs (with shared initial samples $\hist$).
We will use $\vX$ to denote the vector of $k$ polyculture runs.

We are interested in the probability that after the last time $T$, the observed empirical mean reward of the objectively worse arm is higher than the observed empirical mean reward of the better. This would mean that someone who knew the whole history would mistakenly identify the worse arm as the better. $\mc F_{\mono}(T)$ is the proposition that under monoculture, the observed empirical mean reward of arm 2 is higher than that of arm 1 (i.e. $\hat \mu_2(T) > \hat \mu_1(T)$). Under polyculture, the all-seeing observer pools all $k$ runs' data; $\mc F_{\poly}(T)$ is the event that this observer misidentifies the best arm, i.e.\ $\mutall(T) > \muoall(T)$, where
\begin{align*}
  \mujall(T) = \frac{S_j(0) + \sum_{i \in [k]} Z_j^{(i)}(T)}{N_0 + \sum_{i \in [k]} n_j^{(i)}(T)}
\end{align*}
is the pooled empirical mean for arm $j$ across all $k$ runs.

\paragraph*{Key events.}

We define several events used throughout the proofs below.

For a single bandit run $(\mc H_0, \mc X)$, define the \emph{lock-in time}
\begin{align*}
  L = L(\mc H_0, \mc X) = \inf\bigl\{t \ge 1 : A_{t'} = A_t \text{ for all } t' \ge t\bigr\},
\end{align*}
with $L = \infty$ if no such $t$ exists. Intuitively, $L$ is the first time from which the greedy never switches arms again; $A_L$ is the arm locked into. We write $R(t; \mc H_0, \mc X) = \{L \le t\} \cap \{A_L = 1\}$ for the event that the run has locked in to the better arm by time $t$ (we drop $t$ and write $R$ when referring to the asymptotic event $\{L < \infty\} \cap \{A_L = 1\}$). The wrong-lock-in indicator is
\begin{align*}
  W = W(\mc H_0, \mc X) = \mathbf{1}\bigl[\{L < \infty\} \cap \{A_L = 2\}\bigr],
\end{align*}
and we write $W(t; \mc H_0, \mc X) = \mathbf{1}[\{L < t/2\} \cap \{A_L = 2\}]$ for its truncated version.

For polyculture with $k$ independent runs, let $L^{(i)} = L(\mc H_0, \mc X_i)$ and $A_{L^{(i)}}^{(i)}$ denote the lock-in time and locked-in arm of the $i$-th run. We define the following partition of the event $\{\max_{i \in [k]} L^{(i)} < t/2\}$ (all runs have locked in by time $t/2$):
\begin{align*}
  \allwrong(t) &= \bigcap_{i \in [k]} \bigl\{L^{(i)} < t/2\bigr\} \cap \bigl\{A_{L^{(i)}}^{(i)} = 2\bigr\}, \\
  \allright(t) &= \bigcap_{i \in [k]} \bigl\{L^{(i)} < t/2\bigr\} \cap \bigl\{A_{L^{(i)}}^{(i)} = 1\bigr\}, \\
  \somewrong(t) &= \Bigl(\bigcap_{i \in [k]} \bigl\{L^{(i)} < t/2\bigr\}\Bigr) \setminus \bigl(\allwrong(t) \cup \allright(t)\bigr).
\end{align*}
That is, $\allwrong(t)$ is the event that every run locks in to the wrong arm, $\allright(t)$ that every run locks in to the correct arm, and $\somewrong(t)$ that all runs have locked in but at least one locked in to each arm.

\paragraph*{Statement and proof of \Cref{thm:main}.}

With this notation, we formally state and prove the result.

\addtocounter{theorem}{-1}
\begin{theorem}[Formal]
  \label{thm:main}
  For sufficiently large $T$,
  \begin{align*}
    \Pbothk(\mc F_{\mono}(T)) > \Pbothk(\mc F_{\poly}(T)).
  \end{align*}
\end{theorem}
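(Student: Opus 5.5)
The plan is to compare the limiting behavior of both failure probabilities as $T\to\infty$ via the lock-in decomposition. First I would show that a single greedy run locks in almost surely, $\Pboth(L<\infty)=1$. If both arms were pulled infinitely often, the strong law of large numbers would give $\hat\mu_j(t)\asto\mu_j$. Greedy would then eventually always prefer arm 1, since $\mu_1>\mu_2$, which contradicts arm 2 being pulled infinitely often. So at least one arm is pulled only finitely often, and from some point on the run stays on a single arm.

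On the event $\{A_L=1\}$, arm 1's empirical mean converges to $\mu_1$, while arm 2's empirical mean is frozen at a value that greedy kept below arm 1's. On $\{A_L=2\}$, arm 2's mean converges to $\mu_2$ and arm 1's stays frozen below it. Hence $\mathbf 1[\mc F_{\mono}(T)]\to W$ almost surely. The only delicate case is a frozen value exactly equal to the limit, and tie-breaking makes it measure-zero or irrelevant, because a frozen arm's value is at most the locked-in arm's running mean. By bounded convergence, $\Pbothk(\mc F_{\mono}(T))\to p:=\Pboth(W=1)$. I would also check $p>0$. With positive probability the initial samples $\hist$ already favor arm 2, say $S_1(0)=0$ and $S_2(0)\ge1$, and arm 2 then yields enough 1's that arm 1 is never tried again. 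I would also check $p<1$ by a symmetric event.

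For polyculture I would use the partition $\allwrong(T)\cup\allright(T)\cup\somewrong(T)$, whose complement has probability tending to $0$ by the almost-sure lock-in of each run. On $\allright$ the pooled arm-1 count grows linearly and its mean converges to $\mu_1$. Arm 2's pooled data consists of $\hist$ plus finitely many extra samples, so it is frozen. I need to show it is frozen below $\mu_1$ asymptotically, and this is the subtle point. In run $i$, arm 2's per-run frozen mean is at most arm 1's running mean. Pooled across runs the inequality need not persist exactly, but in the limit arm 1's pooled mean is $\mu_1$ and each run's frozen arm-2 mean is at most $\mu_1$ in the limit. The pooled arm-2 mean shares $\hist$ and is a weighted combination of these, so it is at most $\mu_1$, with strict inequality except on a null set. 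Therefore pooled failure vanishes on $\allright$ and $\somewrong$, since on $\somewrong$ both pooled arms get linearly many samples and SLLN gives $\muoall\to\mu_1>\mu_2\leftarrow\mutall$. On $\allwrong$ the symmetric argument gives failure. Thus $\Pbothk(\mc F_{\poly}(T))\to\Pbothk(\bigcap_i\{W_i=1\})$.

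Finally, the $k$ runs are conditionally independent given $\hist$, so this limit equals $\EE{\hist}{q(\hist)^k}$, where $q(\hist)=\PX(W=1\mid\hist)$, while $p=\EE{\hist}{q(\hist)}$. Since $q^k\le q$, with strict inequality wherever $0<q<1$, the polyculture limit is strictly smaller than $p$ provided $\PH(0<q(\hist)<1)>0$. I would verify this for a concrete $\hist$ with tied empirical means, or with arm 2 slightly ahead, where the first few rewards can push the run either way. Because the two limits differ strictly, the inequality holds for all sufficiently large $T$, which proves \Cref{thm:main}. I expect the main obstacle to be the $\allright$ step, namely controlling the pooled frozen arm-2 mean relative to $\mu_1$, including the boundary and tie cases. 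This is presumably why the truncated events with $L<t/2$ are introduced: they guarantee linearly many post-lock samples and give quantitative convergence.
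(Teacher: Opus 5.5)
Your architecture is the paper's: almost-sure lock-in, the four-way partition, the SLLN on $\somewrong$, and conditional independence reducing everything to $\EE{\hist}{q^k}<\EE{\hist}{q}$. But the $\allright$ step has a genuine gap, exactly where you expected one.

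The pooled arm-2 mean is \emph{not} a weighted combination of the per-run frozen means. $\hist$ enters $\mutall$ once, whereas any weighted average of the $\hat\mu_2^{(i)}$ counts it $k$ times, and per-run bounds alone do not transfer. For example, take $N_0=10$, $S_2(0)=1$, $\mu_1=1/2$, $k=10$, and $n_2^{(i)}=Z_2^{(i)}=3$ in every run. Then each $\hat\mu_2^{(i)}=4/13<\mu_1$, yet $\mutall=31/40>\mu_1$.

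What rules this out on $\allright$ is a structural fact about greedy that you never invoke (Lemma~\ref{lem:greedy-min}): the smaller of the two empirical means never increases. So in a run locked into arm 1, the frozen $\hat\mu_2^{(i)}\le S_2(0)/N_0$, i.e.\ the extra arm-2 samples average at most $S_2(0)/N_0$; the example above violates exactly this. Set $a=kS_2(0)$, $b=kN_0$, $c=\sum_i Z_2^{(i)}$, $d=\sum_i n_2^{(i)}$. This fact gives $a/b\ge c/d$, and Lemma~\ref{lem:abcd} with $\alpha=1/k$ then yields $\mutall\le (a+c)/(b+d)$. The right-hand side is a genuine weighted average of the $\hat\mu_2^{(i)}$ (it counts $\hist$ $k$ times), so summing the per-run inequalities puts it below $\mu_1$.

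That last step also needs each per-run inequality to be strict: $\hat\mu_2^{(i)}<\mu_1$ almost surely on lock-in to arm 1. Otherwise you only get $\mutall\le\mu_1$, which together with $\muoall\to\mu_1$ does not force $\muoall>\mutall$ eventually. The same point is what settles the equality case you flagged in the monoculture limit, and tie-breaking does not make that case null. Your remark that the frozen value is at most arm 1's running mean at every $t>L$ is half of the argument. The other half is that this running mean falls strictly below $\mu_1$ at some $t>L$ almost surely. The paper gets this from the law of the iterated logarithm (Lemma~\ref{lem:lock-in-upper-bound}); the fact that the centered walk $\sum_n(X_{1,n}-\mu_1)$ has $\liminf=-\infty$ almost surely would also do.

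A smaller point concerns exhibiting $\hist$ with $0<q(\hist)<1$. ``The first few rewards can push the run either way'' must be upgraded to show that the pushed run never switches back. That needs a uniform-in-time deviation bound such as Lemma~\ref{lem:blackwell-av-bound}. You should also take both initial counts positive, as the paper does, since e.g.\ $S_1(0)=0<S_2(0)$ forces $q=1$.
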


\begin{proof}
  For any fixed $\mc H_0$, by construction,
  \begin{align*}
    \allwrong(t) \cup \somewrong(t) \cup \allright(t) = \Bigl\{\max_{i \in [k]} L^{(i)} < t/2\Bigr\}.
  \end{align*}
  Moreover, these events are mutually disjoint. Thus,
  \begin{align*}
    \PXk(\mc F_{\poly}(t; \mc H_0))
    &= \PXk(\mc F_{\poly}(t; \mc H_0) \cap \allwrong(t; \mc H_0)) \\
    &+ \PXk(\mc F_{\poly}(t; \mc H_0) \cap \somewrong(t; \mc H_0)) \\
    &+ \PXk(\mc F_{\poly}(t; \mc H_0) \cap \allright(t; \mc H_0)) \\
    &+ \PXk\p{\mc F_{\poly}(t; \mc H_0) \cap \left\{\bigcup_{i \in
    [k]} L^{(i)} \ge t/2\right\}}
  \end{align*}
  We bound each of these terms separately in the limit as $t \to \infty$. By
  \Cref{lem:all-wrong},
  \begin{align*}
    \lim_{t \to \infty} \PXk(\mc F_{\poly}(t; \mc H_0) \cap
    \allwrong(t; \mc H_0))
    &\le \lim_{t \to \infty} \PXk(\allwrong(t; \mc H_0)) \\
    &= \PX(W; \mc H_0)^k.
  \end{align*}
  By \Cref{lem:some-wrong},
  \begin{align*}
    \lim_{t \to \infty} \PXk(\somewrong(t; \mc H_0) \cap \mc
    F_{\poly}(t; \mc H_0)) = 0.
  \end{align*}
  By \Cref{lem:all-right},
  \begin{align*}
    \lim_{t \to \infty} \PXk(\allright(t; \mc H_0) \cap \mc
    F_{\poly}(t; \mc H_0)) = 0.
  \end{align*}
  Finally, we use the fact that $\lim_{t \to \infty} \PX(L \ge t/2) = 0$
  (\Cref{lem:lock-in}) to get
  \begin{align*}
    \lim_{t \to \infty}
    \PXk\p{\mc F_{\poly}(t; \mc H_0) \cap \left\{\bigcup_{i \in
    [k]} L^{(i)} \ge t/2\right\}}
    &\le 
    \lim_{t \to \infty}
    \PXk\p{\bigcup_{i \in [k]} L^{(i)} \ge t/2} \\
    &= 0.
  \end{align*}

  Putting these all together, for fixed $\mc H_0$,
  \begin{align*}
    \lim_{t \to \infty} \PXk(\mc F_{\poly}(t; \mc H_0))
    &\le \PX(W; \mc H_0)^k.
  \end{align*}
  Because all rewards are iid, we can ``stack'' the rewards from $\vX$ into a
  single run $\mc X$ when considering monoculture. Thus,
  \begin{align*}
    \Pbothk(\mc F_{\mono}(T)) = \Pboth(\mc F_{\mono}(T)).
  \end{align*}

  By \Cref{lem:mono-fail}, for any fixed $\mc H_0$,
  \begin{align*}
    \lim_{t \to \infty} \PX(\mc F_{\mono}(t; \mc H_0))
    = \PX(W; \mc H_0).
  \end{align*}
  Combining these,
  \begin{align*}
    \lim_{t \to \infty} \PX(\mc F_{\mono}(t; \mc H_0))
    &= \PX(W; \mc H_0) \\
    &\ge \PX(W; \mc H_0)^k \\
    &\ge \lim_{t \to \infty} \PXk(\mc F_{\poly}(t; \mc H_0)).
  \end{align*}
  Since all quantities lie in $[0, 1]$, the Dominated Convergence Theorem allows
  us to integrate over $\mc H_0$ and exchange limit with expectation:
  \begin{align*}
    \lim_{t \to \infty} \Pbothk(\mc F_{\mono}(t))
    &= \EE{\mc H_0}{\lim_{t \to \infty} \PX(\mc F_{\mono}(t; \mc H_0))}
    = \EE{\mc H_0}{\PX(W; \mc H_0)}, \\
    \lim_{t \to \infty} \Pbothk(\mc F_{\poly}(t))
    &= \EE{\mc H_0}{\lim_{t \to \infty} \PXk(\mc F_{\poly}(t; \mc H_0))}
    \le \EE{\mc H_0}{\PX(W; \mc H_0)^k}.
  \end{align*}
  Since $p \ge p^k$ for all $p \in [0,1]$ and $k \ge 2$, we have
  $\EE{\mc H_0}{\PX(W; \mc H_0)} \ge \EE{\mc H_0}{\PX(W; \mc H_0)^k}$,
  giving the weak inequality.

  For strict inequality: by \Cref{lem:inf-prob},
  $\PH\bigl(\PX(W; \mc H_0) \notin \{0, 1\}\bigr) > 0$.
  For any $\mc H_0$ with $\PX(W; \mc H_0) \in (0,1)$, we have
  $\PX(W; \mc H_0) > \PX(W; \mc H_0)^k$.
  Since this strict inequality holds on a positive-measure set of $\mc H_0$'s,
  \begin{align*}
    \EE{\mc H_0}{\PX(W; \mc H_0)} > \EE{\mc H_0}{\PX(W; \mc H_0)^k},
  \end{align*}
  completing the proof.
\end{proof}

\begin{lemma}
  \label{lem:all-wrong}
For any
  $\mc H_0$,
  \begin{align*}
    \lim_{t \to \infty} \PXk(\allwrong(t; \mc H_0)) = \PX(W; \mc
    H_0)^k.
  \end{align*}
\end{lemma}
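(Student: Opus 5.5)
Conditional on a fixed $\mc H_0$, the $k$ polyculture runs are driven by independent reward schedules $\mc X_1,\dots,\mc X_k$ under $\PXk$. Each run's trajectory is a deterministic function of $(\mc H_0,\mc X_i)$, so its lock-in time $L^{(i)}$ and locked-in arm $A^{(i)}_{L^{(i)}}$ are functions of $(\mc H_0,\mc X_i)$ alone. Hence, for fixed $\mc H_0$, the events
\begin{align*}
  E_i(t) = \bigl\{L^{(i)} < t/2\bigr\} \cap \bigl\{A^{(i)}_{L^{(i)}} = 2\bigr\}
\end{align*}
are mutually independent, and each has the same probability as the single-run event $\{L < t/2\}\cap\{A_L = 2\}$ under $\PX$. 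That probability is $\EE{\mc X}{W(t;\mc H_0,\mc X)}$. Since $\allwrong(t;\mc H_0) = \bigcap_{i\in[k]} E_i(t)$, I will first establish the exact identity
\begin{align*}
  \PXk(\allwrong(t;\mc H_0)) = \Bigl(\PX\bigl(\{L < t/2\}\cap\{A_L=2\};\mc H_0\bigr)\Bigr)^k
\end{align*}
for every $t$.

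The remaining step is to take $t\to\infty$ inside the $k$-th power. The events $\{L < t/2\}\cap\{A_L = 2\}$ are nondecreasing in $t$, and their union over $t$ is exactly $\{L<\infty\}\cap\{A_L=2\}$, the event whose indicator is $W$. Continuity of probability from below therefore gives
\begin{align*}
  \PX\bigl(\{L<t/2\}\cap\{A_L=2\};\mc H_0\bigr) \to \PX(W;\mc H_0).
\end{align*}
Because $x\mapsto x^k$ is continuous, the $k$-th powers converge to $\PX(W;\mc H_0)^k$, which proves the lemma.

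The only point needing care is the independence claim: it requires that the lock-in quantities of run $i$ depend on nothing but $\mc H_0$ and $\mc X_i$. This follows from the model, since in polyculture each agent observes only its own run plus the shared $\mc H_0$. One subtlety is that $L$ is defined through the whole infinite future $\{A_{t'}\}_{t'\ge t}$. It is still a measurable function of $(\mc H_0,\mc X_i)$, being a countable intersection and union of events determined by finite prefixes, so independence across $i$ is inherited directly from the independence of the $\mc X_i$. No appeal to the convergence results of \Cref{lem:lock-in} is needed, since monotone continuity suffices even on the event $L = \infty$.
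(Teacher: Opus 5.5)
Your proof is correct and matches the paper's argument: conditional independence of the $k$ runs given $\mc H_0$ factors $\PXk(\allwrong(t;\mc H_0))$ into the $k$-th power of the single-run probability $\PX(W(t);\mc H_0)$. The limit then follows from monotone convergence of the nondecreasing events $\{L<t/2\}\cap\{A_L=2\}$ (the paper's \Cref{lem:wrong-converge}) together with continuity of $x\mapsto x^k$; your remark on the measurability of $L$ is a detail the paper leaves implicit.
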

\begin{proof}
  Because the $k$ bandit runs are independent conditioned on $\mc H_0$,
  \begin{align*}
    \PXk(\allwrong(t; \mc H_0))
    &= \prod_{i=1}^k \PX(W(t); \mc H_0) \\
    &= \PX(W(t); \mc H_0)^k
  \end{align*}
  Taking the limit as $t \to \infty$ and applying \Cref{lem:wrong-converge}
  completes the proof.
\end{proof}

\begin{lemma}
  \label{lem:some-wrong}
For any $\mc H_0$,
  \begin{align*}
    \lim_{t \to \infty} \PXk(\somewrong(t; \mc H_0) \cap \mc
    F_{\poly}(t; \mc H_0)) = 0.
  \end{align*}
\end{lemma}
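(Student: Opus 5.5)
On the event $\somewrong(t)$ every run has locked in by time $t/2$ and at least one run has locked in to each arm. The plan is to show that in this case the pooled data contain, with probability tending to one, a number of pulls of \emph{each} arm that grows linearly in $t$. The strong law of large numbers then forces $\muoall(t) \to \mu_1$ and $\mutall(t) \to \mu_2$. Since $\mu_1 > \mu_2$, the event $\mc F_{\poly}(t)$ becomes impossible in the limit.

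The first step is to count pulls. Fix $\mc H_0$. On $\somewrong(t)$, pick runs $i$ and $i'$ that locked in to arms 1 and 2 respectively. Run $i$ pulls arm 1 at every step from $L^{(i)} < t/2$ through $t$, so $n_1^{(i)}(t) \ge t/2$. Likewise $n_2^{(i')}(t) \ge t/2$. Hence the pooled counts satisfy $\sum_{i} n_j^{(i)}(t) \ge t/2$ for both $j = 1, 2$.

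The next step is to write the pooled sums as averages of i.i.d.\ rewards. The complication is that each run draws from its own reward schedule $\mc X_i$, and the counts are random and depend on the rewards. I would therefore work run by run. Consider the event
\begin{align*}
E_i(\varepsilon, m) = \Bigl\{\bigl|\tfrac{1}{n}\textstyle\sum_{r=1}^{n} X^{(i)}_{j,r} - \mu_j\bigr| \le \varepsilon \text{ for all } n \ge m,\ j \in \{1,2\}\Bigr\}.
\end{align*}
By the strong law, $\PX(E_i(\varepsilon, m)) \to 1$ as $m \to \infty$, uniformly in nothing else since the schedule does not depend on $t$. Take the intersection over the finitely many runs $i \in [k]$. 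On that intersection, the pooled numerator for arm $j$ is a sum over runs of partial sums of length $n_j^{(i)}(t)$, plus the fixed $S_j(0) \le N_0$. Runs with $n_j^{(i)}(t) \ge m$ contribute within $\varepsilon n_j^{(i)}(t)$ of $\mu_j n_j^{(i)}(t)$. Runs with fewer pulls contribute at most $km + N_0$ total error, which is $O(1)$. Dividing by the pooled count, which is at least $t/2$, gives
\begin{align*}
|\mujall(t) - \mu_j| \le \varepsilon + \frac{2(km + N_0)}{t}.
\end{align*}

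To finish, choose $\varepsilon < (\mu_1 - \mu_2)/2$. Then for $t$ large, on $\somewrong(t) \cap \bigcap_i E_i(\varepsilon, m)$ we get $\muoall(t) > \mutall(t)$, so $\mc F_{\poly}(t)$ fails. Consequently
\begin{align*}
\limsup_{t \to \infty} \PXk(\somewrong(t) \cap \mc F_{\poly}(t)) \le \PXk\Bigl(\bigl(\textstyle\bigcap_i E_i(\varepsilon,m)\bigr)^c\Bigr).
\end{align*}
Letting $m \to \infty$ makes the right-hand side vanish.

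The main obstacle is the bookkeeping in the second step: the pooled mean mixes runs with random, reward-dependent stopping counts. The uniform-in-$n$ strong-law event $E_i$ avoids any optional-stopping argument, which is why I would use it rather than concentration bounds at a fixed sample size.
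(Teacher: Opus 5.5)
Your proof is correct, and its core idea matches the paper's. On $\somewrong(t)$ some run is locked into each arm, so the pooled data contain unboundedly many samples of both arms. The strong law then drives $\muoall(t)\to\mu_1$ and $\mutall(t)\to\mu_2$, and the gap $\mu_1-\mu_2$ rules out $\mc F_{\poly}(t)$.

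What differs is the execution. The paper invokes the strong law ``conditioned on $\somewrong(t)$'' and treats $\mujall(t)$ as an empirical mean of i.i.d.\ samples. This glosses over two facts: the conditioning event changes with $t$, and the per-run counts $n_j^{(i)}(t)$ depend on the realized rewards. Your argument handles both issues explicitly, using three pieces:
\begin{itemize}
\item the $t$-independent event $\bigcap_i E_i(\varepsilon,m)$ built from the uniform-in-$n$ strong law for each run;
\item the explicit bound $\sum_i n_j^{(i)}(t)\ge t/2$, which comes from $L^{(i)}<t/2$;
\item the deterministic inclusion $\somewrong(t)\cap\mc F_{\poly}(t)\subseteq\bigl(\bigcap_i E_i(\varepsilon,m)\bigr)^c$ for all large $t$.
\end{itemize}
As a side benefit, you never use independence across runs; a union bound suffices. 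The paper's version is shorter, but yours is what actually makes the limit argument rigorous.
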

\begin{proof}
  We'll use the fact that under $\somewrong$, the all-seeing observer sees
  infinitely many samples (in the limit) of both arms. This is because at least
  one run has locked in to each of the two arms, meaning it will pull that arm
  forever, generating infinitely many samples. This means that the observer's
  estimates of the arm rewards converge in probability to their true values.
  Formally, for each arm $j$,
  \begin{align*}
    \lim_{t \to \infty} \sum_{i \in [k]} n_j^{(i)}(t; \mc H_0, \vec{\mc X}) \given
    \somewrong(t; \mc H_0, \vec{\mc X}) = \infty.
  \end{align*}
Here, we use the fact that
  \begin{align*}
    \mujall(t)
    &= \frac{S_j(0) + \sum_{i \in [k]} Z_j^{(i)}(t)}{N_0 + \sum_{i \in [k]}
    n_j^{(i)}(t)} .
  \end{align*}
  By the strong law of large numbers (which holds conditioned on any
  positive-measure event),
  \begin{align*}
    \mujall(t) \given \somewrong(t; \mc H_0, \vX) \xrightarrow{a.s.} \mu_j,
  \end{align*}
  since it is the empirical mean of infinitely many iid samples (the initial
  $S_j(0)$ samples get ``washed out'' in the limit). But this means that the
  all-seeing observer gets infinitely precise estimates of the true means with
  probability 1, meaning they must identify the correct arm with probability 1
  in the limit. Formally,
  \begin{align*}
\lim_{t \to \infty}
     &\PXk(\mc F_{\poly}(t; \mc H_0) \given
    \somewrong(t; \mc H_0)) \\
     &= \lim_{t \to \infty} \PXk(\muoall(t) < \mutall(t) \given
    \somewrong(t; \mc H_0)) \\
     &\le \lim_{t \to \infty} \PXk(|\muoall(t) - \mu_1| \ge \Delta/2 \cup
     |\mutall(t) -
     \mu_2| \ge \Delta/2 \given \somewrong(t; \mc H_0))
     \tag{$\mc F_{\poly}$ implies that at least one of the empirical means is
     wrong by $\ge \Delta/2$}
     \\
     &= 0,
  \end{align*}
  implying
  \begin{align*}
    \lim_{t \to \infty} \PXk(\somewrong(t; \mc H_0) \cap \mc
    F_{\poly}(t; \mc H_0)) = 0
  \end{align*}
  as desired.
\end{proof}

\begin{lemma}
  \label{lem:all-right}
For any $\mc H_0$,
  \begin{align*}
    \lim_{t \to \infty} \PXk(\allright(t; \mc H_0) \cap \mc
    F_{\poly}(t; \mc H_0)) = 0.
  \end{align*}
\end{lemma}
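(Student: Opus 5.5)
On $\allright(t;\mc H_0)$ every run $i$ has locked into arm 1 before time $t/2$. From then on it pulls arm 1 at every step, so $\sum_{i} n_1^{(i)}(t) \ge k\,t/2 \to \infty$ and arm 1 is sampled infinitely often in the pooled data. Arm 2, however, may have been sampled only finitely often. The plan is therefore to control $\muoall(t)$ by the law of large numbers and to control $\mutall(t)$ through the greedy lock-in condition rather than through concentration.

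\textbf{Step 1 (arm 1 concentrates).} By the strong law of large numbers, as in the proof of \Cref{lem:some-wrong}, $\muoall(t)\to\mu_1$ almost surely on this event. The initial samples and the finitely many pre-lock-in pulls wash out.

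\textbf{Step 2 (arm 2 is frozen per run).} After $L^{(i)}$, run $i$ never pulls arm 2 again. So its arm-2 statistics $(n_2^{(i)}, Z_2^{(i)})$ are fixed from $L^{(i)}$ onward. Write $\hat\mu_2^{(i)}(\infty)$ for run $i$'s frozen empirical mean of arm 2, using its own data plus $\mc H_0$. Since run $i$ keeps choosing arm 1 forever, greedy choice requires $\hat\mu_1^{(i)}(t)\ge \hat\mu_2^{(i)}(\infty)$ for all large $t$, up to ties. Run $i$'s own arm-1 mean also converges to $\mu_1$ by the law of large numbers. Hence $\hat\mu_2^{(i)}(\infty)\le\mu_1$ for each $i$, almost surely on the event.

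\textbf{Step 3 (pooling).} The pooled arm-2 mean $\mutall$ is a weighted average of the initial-sample mean and the per-run post-$\mc H_0$ arm-2 data. I would instead rewrite it as a convex combination of the per-run frozen means $\hat\mu_2^{(i)}(\infty)$, each of which includes $\mc H_0$. This requires handling the fact that $\mc H_0$ is counted once in the pooled estimator but $k$ times in the per-run ones. One way to do this is to note that $\sum_i (N_0+n_2^{(i)})\hat\mu_2^{(i)} - (k-1)S_2(0)$ equals the pooled numerator, and then bound $S_2(0)/N_0$. This is the step I expect to be the main obstacle. The pooled arm-2 mean need not be bounded by $\max_i\hat\mu_2^{(i)}$ when the shared prior is downweighted, and strict inequality with the limit $\mu_1$ is delicate: a run may lock into arm 1 with $\hat\mu_2^{(i)}(\infty)$ exactly equal to $\mu_1$, or arbitrarily close to it.

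\textbf{Step 4 (conclude).} I would first try to show the limiting pooled arm-2 mean is strictly below $\mu_1$ almost surely on the event. Then $\mc F_{\poly}$, which requires $\mutall(t)>\muoall(t)$ with $\muoall(t)\to\mu_1$, eventually fails almost surely, and dominated convergence gives $\PXk(\allright(t)\cap\mc F_{\poly}(t))\to0$. To handle the boundary case, I would use that the frozen means take values in a countable set of rationals. Equality with $\mu_1$ then either has probability zero or leaves $\muoall(t)$ fluctuating around $\mu_1$ with vanishing probability of exceeding a fixed value. If Step 3 cannot give a clean bound, I would restrict to sample paths where every frozen arm-2 mean is at most $\mu_1-\epsilon$ and send $\epsilon\to0$.
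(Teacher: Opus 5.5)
Your outer frame is sound: SLLN for arm 1, freezing of arm 2, then showing $\lim_t\mutall(t)<\mu_1$ almost surely on the event that every run locks into arm 1, and finishing by dominated convergence. But the two steps that carry the lemma are left open.

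\textbf{Pooling (your Step 3).} The missing idea is a monotonicity property of greedy. At every step greedy leaves an arm attaining the smaller empirical mean unpulled, so that mean stays put. Hence $\min\{\hat\mu_1(t),\hat\mu_2(t)\}$ is non-increasing and never exceeds $\min\{S_1(0),S_2(0)\}/N_0$ (Lemma~\ref{lem:greedy-min}). In a run locked into arm 1, arm 2 is the weakly smaller arm at every later decision, so $\hat\mu_2^{(i)}(\infty)\le S_2(0)/N_0$ for every $i$. This is exactly the bound on $S_2(0)/N_0$ your decomposition needs. With $m=\max_i\hat\mu_2^{(i)}(\infty)$,
\[
S_2(0)+\sum_i Z_2^{(i)}=\sum_i\bigl(N_0+n_2^{(i)}\bigr)\hat\mu_2^{(i)}(\infty)-(k-1)S_2(0)\le m\sum_i\bigl(N_0+n_2^{(i)}\bigr)-(k-1)N_0\,m=m\Bigl(N_0+\sum_i n_2^{(i)}\Bigr),
\]
so $\mutall\le m$ on this event, contrary to your worry. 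The paper phrases the same fact as $\sum_i Z_2^{(i)}/\sum_i n_2^{(i)}\le S_2(0)/N_0$ combined with Lemma~\ref{lem:abcd}.

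Without this greedy-specific input the step really does fail. Take $S_2(0)=0$, $N_0=1$, $k=2$, $n_2^{(i)}=Z_2^{(i)}=1$: the per-run means are $1/2$ but the pooled mean is $2/3$. So your fallback of restricting to frozen means at most $\mu_1-\epsilon$ (say $\mu_1=0.6$, $\epsilon=0.1$) does not control the pooled mean.

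\textbf{Strictness.} Your Step 2 only gives $\hat\mu_2^{(i)}(\infty)\le\mu_1$, so you must prove the boundary case is null. None of your routes does this. Countability disposes only of irrational $\mu_1$, whereas the lemma is for arbitrary fixed $\mu_1>\mu_2$. Sending $\epsilon\to0$ only reaches paths on which every frozen mean is already strictly below $\mu_1$. Fluctuations of $\muoall(t)$ about $\mu_1$ point the wrong way: if $\lim_t\mutall(t)=\mu_1$, then for large $t$ the event $\mc F_{\poly}(t)$ is exactly $\{\muoall(t)<\mu_1\}$, which such fluctuations make likely rather than unlikely.

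The fluctuation argument has to be applied inside each run instead. By the law of the iterated logarithm, a run that pulls arm 1 forever has its arm-1 empirical mean fall strictly below $\mu_1$ infinitely often almost surely. Deviations of order $\sqrt{\log\log n/n}$ swamp the $O(1/n)$ effect of $\mc H_0$ and the pre-lock-in data. At such a time, a frozen arm-2 mean $\ge\mu_1$ would make greedy switch back to arm 2, contradicting lock-in. So $\hat\mu_2^{(i)}(\infty)<\mu_1$ almost surely on the event; this is Lemma~\ref{lem:lock-in-upper-bound}. Combined with the pooling bound, this gives $\lim_t\mutall(t)\le m<\mu_1$.

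Your pathwise argument then closes. $\allright(t)$ increases to the event that all runs lock into arm 1, on which $\muoall(t)\to\mu_1$. Hence $\mathbf{1}[\allright(t)\cap\mc F_{\poly}(t)]\to0$ almost surely, and dominated convergence finishes the proof.
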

\begin{proof}
  In this case, all polyculture runs converge on arm 1 in the limit. This means
  that $\hat \mu_1^{(i)}(t) > \hat \mu_2^{(i)}(t)$ for all $i$, $t > L$.
  Naively, we might hope that this implies $\muoall(t) > \mutall(t)$ for $t >
  L$, completing the proof. This fails for two reasons.
  \begin{enumerate}
    \item In general, $a_i > b_i$ for all $i$ does not imply that same
      inequality applies to their weighted averages (this is Simpson's Paradox).
    \item $\mujall$ is not simply a weighted average of $\hat \mu_j^{(i)}$. A
      naive weighted average would ``count'' the initial history $\mc H_0$ $k$
      times, while $\mujall$ only includes a single copy.
  \end{enumerate}

  We'll develop a more sophisticated strategy to deal with these issues. We
  begin by fixing $\mc H_0$. By \Cref{lem:lock-in-upper-bound}, when a bandit
  run locks in to arm 1 (i.e., on the event $R(t)$),
  \begin{align*}
    \lim_{t \to \infty}
    \PX(\hat \mu_2(t; \mc H_0) < \mu_1 \given R(t; \mc H_0)) = 1.
  \end{align*}
  Therefore, with probability 1 in the limit, each polyculture agent $i$'s
  estimate for arm 2 is strictly smaller than $\mu_1$, i.e.,
  \begin{align*}
  \hat \mu_2^{(i)}(t; \mc H_0, \mc X_i) < \mu_1.
  \tag{$\forall i \in [k]$}
  \end{align*}

  Next, we show that this implies $\mutall(t) < \mu_1$ for sufficiently large
  $t$. Let
  \begin{align*}
    a
    &= k S_2(0) \\
    b
    &= k N_0 \\
    c
    &= \sum_{i \in [k]} Z_2^{(i)}(t) \\
    d
    &= \sum_{i \in [k]} n_2^{(i)}(t).
  \end{align*}
  Then,
  \begin{align*}
    \mutall(t)
    &= \frac{\frac{1}{k} a + c}{\frac{1}{k} b + d} .
\end{align*}
  The naive (weighted) mean is
  \begin{align*}
    \hat \mu_2^{\text{naive}}(t)
    &= \frac{a+c}{b+d}.
  \end{align*}
  Note that if each $\hat\mu_2^{(i)}(t) < \mu_1$, then $\hat
  \mu_2^{\text{naive}}(t) < \mu_1$: each inequality $\hat\mu_2^{(i)}(t) < \mu_1$
  is equivalent to $S_2(0) + Z_2^{(i)}(t) < \mu_1(N_0 + n_2^{(i)}(t))$, and
  summing over all $i \in [k]$ gives $a + c < \mu_1(b + d)$, i.e.\
  $\hat\mu_2^{\text{naive}}(t) < \mu_1$.
  By \Cref{lem:greedy-min}, $a/b \ge c/d$.
  Applying \Cref{lem:abcd},
  \begin{align*}
    \mutall(t) \le 
    \hat \mu_2^{\text{naive}}(t).
  \end{align*}
  Therefore,
  \begin{align*}
    \PXk(\mutall(t; \mc H_0) < \mu_1 \given \allright(t))
    &\ge
    \PXk(\mu_2^{(\text{naive})}(t; \mc H_0) < \mu_1 \given \allright(t)) \\
    &\ge \PXk\p{\bigcap_{i \in [k]}\hat \mu_2^{(i)}(t; \mc H_0) < \mu_1
    \given \allright(t)} \\
    &= \prod_{i \in [k]} \PXk(\hat \mu_2(t; \mc H_0, \mc X_i) < \mu_1 \given
    R^{(i)}(t; \mc H_0)) \\
    &= \PX(\hat \mu_2(t; \mc H_0) < \mu_1 \given R(t; \mc H_0))^k
  \end{align*}
  Taking $t \to \infty$ on both sides and applying
  \Cref{lem:lock-in-upper-bound} yields
  \begin{align*}
    \lim_{t \to \infty} \PXk(\mutall(t; \mc H_0) < \mu_1 \given
    \allright(t))
    &= 1.
  \end{align*}

  Finally, by the strong law of large numbers, as $t \to \infty$,
  \begin{align*}
    \muoall(t) \given \allright \overset{a.s.}{\longrightarrow} \mu_1.
  \end{align*}
  (Note that the SLLN holds even conditioned on some event as long as that event
  occurs with strictly positive probability.)
  Therefore, we can apply \Cref{lem:inf-inequality} to get
  \begin{align*}
    \lim_{t \to \infty} \PXk(\muoall(t) > \mutall(t) \given \allright(t)) = 1.
  \end{align*}
  As a result,
  \begin{align*}
    \lim_{t \to \infty} \PXk(\mc F_{\poly}(t; \mc H_0) \given
    \allright(t))
    &= \lim_{t \to \infty} \PXk(\mutall(t; \mc H_0) > \muoall(t; \mc H_0)
    \given \allright(t)) \\
    &= 0.
  \end{align*}
\end{proof}

\begin{lemma}
  \label{lem:mono-fail}
  For any $\mc H_0$,
  \begin{align*}
    \lim_{t \to \infty} \PX(\mc F_{\mono}(t; \mc H_0))
    = \PX(W; \mc H_0).
  \end{align*}
\end{lemma}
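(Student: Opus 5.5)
We need to show that $\lim_{t\to\infty}\PX(\mc F_{\mono}(t;\mc H_0)) = \PX(W;\mc H_0)$ for a single greedy run with fixed $\mc H_0$. The plan is to decompose on the lock-in behaviour, using the same tools the paper invokes for polyculture. By \Cref{lem:lock-in}, $\PX(L \ge t/2)\to 0$, so up to a vanishing term the sample space splits into $\{L<t/2, A_L=2\}$ (indicator $W(t)$) and $R(t/2)=\{L<t/2, A_L=1\}$:
\begin{align*}
  \PX(\mc F_{\mono}(t)) = \PX(\mc F_{\mono}(t)\cap\{W(t)=1\}) + \PX(\mc F_{\mono}(t)\cap R(t/2)) + o(1).
\end{align*}

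\textbf{Wrong lock-in term.} On $\{L<t/2, A_L=2\}$ the greedy rule pulls arm 2 at every time $t'\ge L$. Greedy picks the arm with the highest empirical mean, and the empirical mean of arm 1 is frozen after lock-in. Hence at time $t$, after the last choice, we have $\hat\mu_2(t)\ge\hat\mu_1(t)$, up to tie-breaking. With Bernoulli rewards, the event of an exact tie with a frozen $\hat\mu_1$ while arm 2 is pulled infinitely often has vanishing probability, since by the SLLN $\hat\mu_2\to\mu_2$ and the denominators grow. So $\mc F_{\mono}(t)$ holds on this event up to $o(1)$ probability, and this term equals $\PX(W(t))+o(1)$. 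By \Cref{lem:wrong-converge}, this converges to $\PX(W)$.

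\textbf{Right lock-in term.} This term should vanish. On $R(t/2)$, arm 1 is pulled from time $L<t/2$ onward, so $n_1(t)\to\infty$ and, by the SLLN applied conditionally on the positive-probability event, $\hat\mu_1(t)\to\mu_1$ almost surely. Meanwhile, by \Cref{lem:lock-in-upper-bound}, $\PX(\hat\mu_2(t)<\mu_1 \given R(t))\to 1$. Since $\hat\mu_2$ is frozen after lock-in, it stays strictly below $\mu_1$ while $\hat\mu_1(t)\to\mu_1$. Applying \Cref{lem:inf-inequality} exactly as in the proof of \Cref{lem:all-right} (the case $k=1$, where no Simpson's-paradox issue arises) gives $\PX(\hat\mu_1(t)>\hat\mu_2(t)\given R(t/2))\to 1$, so $\PX(\mc F_{\mono}(t)\cap R(t/2))\to0$.

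\textbf{Main obstacle.} The delicate step is handling the mismatch of time indices: $W(t)$ uses $L<t/2$, and the lemmas are stated for $R(t)$ versus $R(t/2)$. Both are resolved by monotonicity in $t$ of these events, which increase to their asymptotic versions, together with \Cref{lem:lock-in}. The tie-breaking boundary in the wrong lock-in term also needs care. For it, I would first try to argue directly that greedy's choice at time $T$ already certifies $\hat\mu_2\ge\hat\mu_1$ just before the final update. The final pull then moves $\hat\mu_2$ by $O(1/n_2)$, so strict failure can be violated only on an event whose probability vanishes by the SLLN. Summing the three pieces gives $\lim_t\PX(\mc F_{\mono}(t))=\PX(W)$.
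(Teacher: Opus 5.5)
Your proposal is correct and follows the paper's route. Like the paper, you split on $\{L<t/2\}$ versus $\{L\ge t/2\}$, discard the latter via \Cref{lem:lock-in}, identify failure on the former with wrong lock-in, and finish with \Cref{lem:wrong-converge}; you also justify the identification step that the paper only asserts. That justification can be shortened: because lock-in is defined over the infinite horizon, $A_{t+1}=A_L$, so greedy's choice at time $t+1$ already orders $\hat\mu_1(t)$ and $\hat\mu_2(t)$, which makes the right-lock-in term identically empty for the strict failure event and the $O(1/n_2)$ final-pull argument unnecessary, leaving only ties on the wrong-lock-in event; those vanish once you know the frozen value $\hat\mu_1(L)$ is almost surely strictly below $\mu_2$ there (the arm-2 analogue of the iterated-logarithm argument in \Cref{lem:lock-in-upper-bound}), which the SLLN alone does not give.
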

\begin{proof}
  \begin{align*}
    \PX(\mc F_{\mono}(t; \mc H_0))
    &= \PX(\hat \mu_1(t; \mc H_0) \le \hat
    \mu_2(t; \mc H_0)) \\
    &= \PX(\{\hat \mu_1(t; \mc H_0) \le \hat
    \mu_2(t; \mc H_0)\} \cap \{L < t/2\}) \\
    &+ \PX(\{\hat \mu_1(t; \mc H_0) \le \hat
    \mu_2(t; \mc H_0)\} \cap \{L \ge t/2\})
\end{align*}
  By \Cref{lem:lock-in} and the Monotone Convergence Theorem, $\lim_{t \to
  \infty} \PX(L \ge t/2) = \PX(L = \infty) = 0$. Therefore,
  \begin{align*}
    \lim_{t \to \infty}
    \PX(\mc F_{\mono}(t; \mc H_0))
    &= \lim_{t \to \infty}
    \PX(\{\hat \mu_1(t; \mc H_0) \le \hat
    \mu_2(t; \mc H_0)\} \cap \{L < t/2\}) \\
    &= \lim_{t \to \infty} \PX(\{A_L = 2\} \cap \{L < t/2\}) \\
    &= \lim_{t \to \infty} \PX(W(t)) \\
    &= \PX(W) \tag{by \Cref{lem:wrong-converge}}
  \end{align*}
\end{proof}

\begin{lemma}
  \label{lem:inf-prob}
  \begin{align*}
    \PH(\PX(W; \mc H_0) \notin \{0, 1\}) > 0.
  \end{align*}
\end{lemma}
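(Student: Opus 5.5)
Since $\mc H_0$ takes only finitely many values (the pair $(S_1(0), S_2(0)) \in \{0,\dots,N_0\}^2$), it suffices to exhibit a single initial history $h$ with $\PH(\mc H_0 = h) > 0$ and $\PX(W; h) \in (0,1)$. Every value of $\mc H_0$ has positive probability, because $\mu_1, \mu_2$ are fixed in $(0,1)$ with $\mu_1 > \mu_2$. My plan is to take the symmetric history $S_1(0) = S_2(0) = s$ for some $s$ with $1 \le s \le N_0 - 1$. If $N_0 = 1$, I will instead take $S_1(0) = S_2(0)$ equal to $0$ or $1$ and adjust the argument below accordingly. I then show directly that both $W$ and its complement have positive $\PX$-probability.

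\textbf{Step 1: $\PX(W; h) > 0$.} I will exhibit a finite prefix of the reward schedules, which has positive probability, after which the greedy run is already committed to arm 2 with positive conditional probability forever. First I fix a tie-breaking convention; with ties at time 1 the first pull is arm 1 (or arm 2, and the argument is symmetric). Take $X_{1,1} = 0$. Then $\hat\mu_1 = s/(N_0+1) < s/N_0 = \hat\mu_2$, so the greedy switches to arm 2. Next take $X_{2,n} = 1$ for $n = 1, \dots, m$, which pushes $\hat\mu_2$ strictly above $\hat\mu_1$ by a margin. After that, arm 1 is frozen at $\hat\mu_1 = s/(N_0+1)$, and the run locks into arm 2 forever exactly when the arm-2 empirical mean never again drops to or below that value. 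Since $\mu_2 > 0$, the running mean $(S_2(0) + m + \sum_{n \le r} X_{2,m+n})/(N_0 + m + r)$ stays above any level below $1$ forever with positive probability once $m$ is large. I would prove this by a standard argument: the random walk $\sum (X_{2,n} - c)$ with drift $\mu_2 - c$ either has positive drift, or, if the threshold $c$ exceeds $\mu_2$, I instead use a head start of $m$ large relative to the frozen level. More precisely, I first choose the level $c = s/(N_0+1)$. If $c < \mu_2$, the SLLN plus a positive-probability prefix gives the claim, since the event that a positive-drift walk started high never hits zero has positive probability. If $c \ge \mu_2$ this fails, so I would pick $s$ small (e.g.\ $s=0$ if needed, making $c=0$, which is trivially never reached after a 1) to ensure $c < \mu_2$.

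\textbf{Step 2: $\PX(W; h) < 1$.} Symmetrically, I use the prefix $X_{1,1} = 1, X_{1,2} = 1, \dots$ (or start arm 1 with a 1) so that arm 1 leads. Then arm 2 is frozen at some level $c' < \mu_1$, chosen via the same choice of $s$, and with positive probability arm 1's mean never falls below $c'$. On that event the run locks into arm 1, so $W = 0$.

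The main obstacle is choosing a single $h$ that works for both steps under arbitrary $\mu_1 > \mu_2$. The frozen level must lie below $\mu_2$ in Step 1 and below $\mu_1$ in Step 2; the history $S_1(0) = S_2(0) = 0$, where any reward of 1 lifts an arm strictly above a frozen arm at $0$, handles both steps cleanly. If the tie-breaking rule makes the all-zero case awkward, I would fall back to a general positive-drift ``never return'' lemma for Bernoulli walks.
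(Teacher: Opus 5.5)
Your reduction to exhibiting one history $h$ with $\PX(W;h)\in(0,1)$ matches the paper, but the lock-in constructions have a genuine gap. In both steps you freeze the losing arm after at most one pull. With ties going to arm 1, Step 1 freezes arm 1 at $s/(N_0+1)$ and Step 2 freezes arm 2 at $s/N_0$. The winning arm can stay above a frozen level forever with positive probability only if that level is strictly below its true mean. So your construction needs $\mu_2 > s/(N_0+1)$ and $\mu_1 > s/N_0$. For $s\ge 1$ this fails whenever $\mu_2 \le 1/(N_0+1)$ or $\mu_1 \le 1/N_0$, and the setup allows any $0<\mu_2<\mu_1<1$.

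Your fallback $s=0$ does not repair this. The greedy rule is deterministic given $(\mc H_0,\mc X)$, and under a fixed tie-breaking convention the all-zero history makes the first-chosen arm be chosen forever. A reward of $0$ leaves both means tied at $0$, so the tie goes to the same arm again. Its first $1$ then makes its mean strictly positive while the other arm stays at $0$. Hence $\PX(W;h)\in\{0,1\}$ for that $h$. Your own Step 1 switch, which uses $s/(N_0+1)<s/N_0$, is also unavailable at $s=0$. This is why the paper only uses histories with $S_1(0),S_2(0)\neq 0$. A ``never-return'' lemma for positive-drift walks cannot help either: when the frozen level is at or above the winning arm's mean, the drift relative to that level is not positive.

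The missing idea is to push the losing arm's frozen mean arbitrarily low with many zeros, rather than accepting the level fixed by $s$. Keep any $h$ with $S_1(0),S_2(0)\ge 1$. For a target arm $j$ and the other arm $j'$, intersect three events:
\begin{itemize}
\item the first $k_1$ rewards of arm $j'$ are all $0$;
\item the first $k_0$ rewards of arm $j$ are all $1$;
\item after more than $k_0$ pulls, arm $j$'s empirical mean never falls below $\mu_j/2$.
\end{itemize}
The last event has positive conditional probability for large $k_0$, by your head-start argument or by the maximal inequality the paper takes from Blackwell.

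Choose $k_1$ so that $S_{j'}(0)/(N_0+k_1) < \min\{S_j(0)/N_0,\mu_j/2\}$. Arm $j$'s mean can only rise during its first $k_0$ pulls, so at every time it exceeds $\min\{S_j(0)/N_0,\mu_j/2\}$. It therefore strictly exceeds arm $j'$'s mean once arm $j'$ has absorbed its $k_1$ zeros. So arm $j'$ is chosen at most $k_1$ times in total, and the run locks into $j$.

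The events concern independent reward schedules, so their intersection has positive probability. The argument works for both $j=1,2$, for every $0<\mu_2<\mu_1<1$, and for any deterministic tie-breaking. This is the paper's $\mc E_1,\mc E_2,\mc E_3$ construction.
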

\begin{proof}
  Because every $\mc H_0$ is sampled with positive probability, it suffices to
  show that $\exists \mc H_0$ such that $\PX(W; \mc H_0) \in (0, 1)$. In
  fact, we will show that this is true for every $\mc H_0$ such that $S_1(0),
  S_2(0) \ne 0$. For any such $\mc H_0$, we can will show that the set of $\mc
  X$ that locks in to either arm 1 or arm 2 has nonzero measure.

  In particular, lock-in to arm $j$ occurs if:
  \begin{itemize}
    \item $\hat \mu_j(t) \ge \mu_j - c$ for all $t > t_0$ (the empirical mean
      for the lock-in arm is reasonably high).
    \item $\hat \mu_{j'}(t_0) < \mu_j - c$ (the empirical mean for the other arm
      is sufficiently low).
  \end{itemize}
  We will show that these happen simultaneously with positive probability.
  First, we apply \Cref{lem:blackwell-av-bound} to the empirical reward sequence
  for arm $j$ to get
  \begin{align*}
    \PX\p{\exists k > k_0 : \frac{Z_j(k)}{k} \le \mu_j -
    \sqrt{\frac{2 \log(1/\delta)}{k_0}}}
    &\le \delta
  \end{align*}
  Moreover, this probability only decreases if we condition on the event that
  the first $k_0$ rewards for arm $j$ are realized to be 1, since this weakly
  increases $Z_j(k)$ for all $k$. Thus,
  \begin{align*}
    \PX\p{\exists k > k_0 : \frac{Z_j(k)}{k} \le \mu_j - \sqrt{\frac{2
    \log(1/\delta)}{k_0}} \given X_{j,1} = X_{j,2} = \dots = X_{j,k_0} = 1}
    &\le \delta.
  \end{align*}
  We'll also consider the event that the first $k_1$ rewards from arm $j'$ are
  realized to be 0. For sufficiently large $k_1$, we can drive $Z_{j'}(k_1)/k_1$
  to be arbitrarily close to 0. In particular, we will choose $k_1$ such that
  \begin{equation}
    \label{eq:k1-def}
    \frac{S_{j'}(0)}{k_1 + N_0}
    < \min\p{S_j(0), \mu_j - \sqrt{\frac{2
    \log(1/\delta)}{k_0}} - \frac{N_0}{k_0 + N_0} }.
  \end{equation}
  We have thus defined three events:
  \begin{align*}
    \mc E_1:
    &X_{j,1} = X_{j,2} = \dots = X_{j,k_0} = 1 \\
    \mc E_2:
    & \not \exists k > k_0 : \frac{Z_j(k)}{k} \le \mu_j - \sqrt{\frac{2
    \log(1/\delta)}{k_0}} \\
    \mc E_3:
    &X_{j',1} = X_{j',2} = \dots = X_{j',k_1} = 0
  \end{align*}
  The intersection of these events occurs with positive probability, and
  together, we will argue that they imply lock-in to arm $j$.

  To do so, observe that the first $k_1$ times that arm $j'$ is chosen, its
  empirical mean will monotonically decrease. On the $k_1$th time it is chosen,
  its empirical mean (including the $N_0$ original samples) will be
  $S_{j'(0)}/(k_1 + N_0)$. By~\eqref{eq:k1-def}, this will eventually drop below
  $S_j(0)$, so arm $j$ will be selected at some point. Because the first $k_0$
  realized rewards for arm $j$ are 1, once it is selected, it will be selected
  for at least the next $k_0$ times. Let $t_0$ be the last of these $k_0$
  initial selections for arm $j$. Under $\mc E_2$, for all $t > t_0$,
  \begin{align*}
    \hat \mu_j(t)
    &= \frac{N_j(t)}{N_j(t) + N_0} \frac{Z_j(N_j(t))}{N_j(t)} +
    \frac{N_0}{N_j(t) + N_0} \frac{S_j(0)}{N_0} \\
    &\ge \frac{N_j(t)}{N_j(t) + N_0} \p{\mu_j - \sqrt{\frac{2
    \log(1/\delta)}{k_0}}} + \frac{N_0}{N_j(t) + N_0} \p{\mu_j - \mu_j +
    \frac{S_j(0)}{N_0}} \\
    &\ge \mu_j - \sqrt{\frac{2 \log(1/\delta)}{k_0}} - \frac{N_0}{N_j(t) + N_0}
    \p{\mu_j - \frac{S_j(0)}{N_0}} \\
    &\ge \mu_j - \sqrt{\frac{2 \log(1/\delta)}{k_0}} - \frac{N_0}{k_0 + N_0}.
  \end{align*}
  Again using~\eqref{eq:k1-def}, once arm $j'$ has been chosen $k_1$ times, its
  empirical mean will be below $\hat \mu_j(t)$ for all $t > t_0$. Thus, we have
  established that under $\mc E_1, \mc E_2, \mc E_3$, arm $j'$ cannot be chosen
  more than $k_1$ times, meaning we have locked in to arm $j$ with positive
  probability. Because this holds for both $j \in \{1, 2\}$,
  \begin{align*}
    \PH(\PX(W; \mc H_0) \notin \{0, 1\}) > 0
  \end{align*}
  as desired, completing the proof.
\end{proof}

\begin{lemma}
  \label{lem:lock-in}
  \begin{align*}
    \PX(L < \infty) = 1.
  \end{align*}
\end{lemma}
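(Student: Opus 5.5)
The plan is a pathwise argument: fix $\mc H_0$, identify a probability-one set of reward schedules $\mc X$, and show that on this set the greedy run pulls some arm only finitely often. Lock-in follows immediately from that. Because all the randomness is placed in the fixed reward schedules $\{X_{j,n}\}$, we never need a law of large numbers for adaptively sampled data.

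\textbf{Step 1 (a good event).} Let $\Omega^\ast$ be the event that
\begin{align*}
  \frac{1}{m}\sum_{n=1}^{m} X_{j,n} \to \mu_j \quad \text{as } m\to\infty, \qquad j\in\{1,2\}.
\end{align*}
The schedule for each arm is an iid Bernoulli$(\mu_j)$ sequence, so the strong law of large numbers gives $\PX(\Omega^\ast)=1$. On $\Omega^\ast$, whenever $n_j(t)\to\infty$ we get
\begin{align*}
  \hat\mu_j(t)=\frac{S_j(0)+\sum_{n=1}^{n_j(t)}X_{j,n}}{N_0+n_j(t)} \to \mu_j ,
\end{align*}
because the fixed initial counts $S_j(0)$ and $N_0$ are washed out. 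This is exactly where indexing rewards by the number of pulls of an arm, rather than by time, pays off.

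\textbf{Step 2 (both arms cannot be pulled infinitely often on $\Omega^\ast$).} Suppose, on some outcome in $\Omega^\ast$, that both $n_1(t)\to\infty$ and $n_2(t)\to\infty$.
\begin{itemize}
  \item Set $\Delta=\mu_1-\mu_2>0$. By Step 1 there is a $t^\ast$ with $|\hat\mu_j(t)-\mu_j|<\Delta/2$ for both $j$ and all $t\ge t^\ast$.
  \item Hence $\hat\mu_1(t)>\hat\mu_2(t)$ strictly for all $t\ge t^\ast$.
  \item The greedy rule then selects arm 1 at every step after $t^\ast$. Because the inequality is strict, the tie-breaking rule never matters.
  \item So arm 2 is never pulled after $t^\ast$, contradicting $n_2(t)\to\infty$.
\end{itemize}
One should be careful here about the timing convention: the choice at $t+1$ is based on the estimates at $t$. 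The argument is unaffected, since the estimates are eventually ordered at every time.

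\textbf{Step 3 (conclude).} At least one arm is pulled infinitely often, since every step pulls some arm. By Step 2, on $\Omega^\ast$ exactly one arm $j'$ is pulled only finitely often. Let $\tau$ be the last time it is pulled, or $0$ if it is never pulled. Then $A_{t'}$ equals the other arm for all $t'>\tau$, so $L\le\tau+1<\infty$. Therefore $\PX(L<\infty)\ge\PX(\Omega^\ast)=1$, and this holds for every fixed $\mc H_0$.

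The main point requiring care is Step 2's appeal to convergence along the random pull counts $n_j(t)$. The reward-schedule construction handles it cleanly: convergence along the whole deterministic sequence implies convergence along any subsequence of indices tending to infinity, regardless of how those indices are chosen adaptively.
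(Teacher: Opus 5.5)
Your proof is correct and follows essentially the same route as the paper: the paper's event $B$ (the strong law holding along both reward schedules) is your $\Omega^\ast$, and it shows $\{L=\infty\}\cap B=\varnothing$ by the same reasoning. Infinitely many switches force $n_j(t)\to\infty$ for both arms, so $\hat\mu_1(t)>\hat\mu_2(t)$ eventually and arm 2 is never pulled again; your Step 3 just states the contrapositive (some arm pulled finitely often implies $L<\infty$) explicitly.
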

\begin{proof}
  Under $\{L = \infty\}$, by definition, the greedy switches between arms
  infinitely often. This means that each arm is chosen infinitely often,
  implying that $\lim_{t \to \infty} n_j(t) = \infty$ for both arms $j \in \{1,
  2\}$. Observe that
  \begin{align*}
    \hat \mu_j(t)
    &= \frac{S_j(0) + \sum_{i=1}^{n_j(t)} X_{j,n}}{N_0 +
    n_j(t)} \\
&= \frac{n_j(t)}{N_0 + n_j(t)} \frac{\sum_{i=1}^{n_j(t)} X_{j,n}}{n_j(t)} +
    \frac{S_j(0)}{N_0 + n_j(t)}
  \end{align*}
  By the strong law of large numbers,
  \begin{align*}
    \PX\p{\lim_{t \to \infty}\frac{\sum_{i=1}^{n} X_{j,n}}{n} = \mu_j} = 1.
  \end{align*}
  Since $n_j(t) \to \infty$ as $t \to \infty$,
  \begin{align*}
    \lim_{t \to \infty}\frac{\sum_{i=1}^{n} X_{j,n}}{n}
    = \lim_{t \to \infty}\frac{\sum_{i=1}^{n_j(t)} X_{j,n}}{n_j(t)}.
  \end{align*}
  Therefore, with probability 1,
  \begin{align*}
    \lim_{t \to \infty} \hat \mu_j(t)
    &=\lim_{t \to \infty} \frac{n_j(t)}{N_0 + n_j(t)} \frac{\sum_{i=1}^{n_j(t)}
    X_{j,n}}{n_j(t)} + \frac{S_j(0)}{N_0 + n_j(t)} \\
    &= \mu_j.
  \end{align*}
  This means that with probability 1, for sufficiently large $t$, $|\hat
  \mu_j(t) - \mu_j| < \Delta/2$. This means that for all sufficiently large $t$,
  \begin{align*}
    \hat \mu_1(t)
    &> \mu_1 - \frac{\Delta}{2} \\
    &= \mu_2 + \frac{\Delta}{2} \tag{$\mu_1 - \mu_2 = \Delta$} \\
    &> \hat \mu_2(t).
  \end{align*}
  Thus, for all sufficiently large $t$, the greedy algorithm will
  never select arm 2, meaning that $L < \infty$.
  
  Let $A$ be the event that $\{L = \infty\}$ and $B$ be the event that
  \begin{align*}
    \left\{
      \lim_{t \to \infty}\frac{\sum_{i=1}^{n} X_{1,n}}{n} = \mu_1
    \right\}
    \cap
    \left\{
      \lim_{t \to \infty}\frac{\sum_{i=1}^{n} X_{2,n}}{n} = \mu_2
    \right\}.
  \end{align*}
  We have shown that $A \cap B \subseteq \overline{A}$, meaning $A \cap B =
  \varnothing$.
  Therefore,
  \begin{align*}
    \PX(A)
    &= \PX(A \cap B) + \PX(A \cap \overline{B}) \\
    &= \PX(\varnothing) + \PX(A \cap \overline{B}) \\
    &= \PX(A \cap \overline{B}) \\
    &\le \PX(\overline{B}) \\
    &= 0.
  \end{align*}
\end{proof}

\begin{lemma}
  \label{lem:greedy-min}
  For the greedy algorithm run with initial histories $(S_1(0), S_2(0))$, for
  all $t$,
  \begin{align*}
    \min\left\{\hat \mu_1(t), \hat \mu_2(t)\right\} \le \frac{\min\{S_1(0),
    S_2(0)\}}{N_0}.
  \end{align*}
\end{lemma}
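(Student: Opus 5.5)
The plan is to prove the bound by induction on $t$, maintaining the invariant
\begin{align*}
  m(t) := \min\{\hat \mu_1(t), \hat \mu_2(t)\} \le m_0 := \frac{\min\{S_1(0), S_2(0)\}}{N_0}.
\end{align*}
The proof needs only two facts about the greedy dynamics. First, an arm's empirical mean changes only at rounds where that arm is pulled. Second, greedy always pulls an arm whose current empirical mean is maximal, so the arm it does not pull has mean at most the maximum.

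The base case $t=0$ holds with equality, since $\hat \mu_j(0) = S_j(0)/N_0$.

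For the inductive step, suppose $m(t) \le m_0$, and let $a = A_{t+1}$ be the arm pulled at time $t+1$ and $b$ the other arm. Greedy selection gives $\hat \mu_a(t) \ge \hat \mu_b(t)$, so $\hat \mu_b(t) = m(t)$; this holds whichever way ties are broken, because under a tie both means equal $m(t)$. Arm $b$ is not pulled at time $t+1$, so $n_b(t+1) = n_b(t)$ and $S_b(t+1) = S_b(t)$, which gives $\hat \mu_b(t+1) = \hat \mu_b(t)$. Therefore
\begin{align*}
  m(t+1) \le \hat \mu_b(t+1) = \hat \mu_b(t) = m(t) \le m_0,
\end{align*}
which closes the induction. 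In fact the argument shows more: $m(t)$ is non-increasing in $t$. This is the intuition behind the lemma. The arm currently at the bottom is frozen until it becomes the top arm. The top arm can fall below it, but it can never lift the minimum above where the bottom arm already sits.

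The main point needing care is that the argument must not depend on tie-breaking or on which arm started lower. The case analysis above avoids this by working only with "the unpulled arm" rather than with a fixed arm index, and noting that the unpulled arm's mean is always $\le$ the pulled arm's mean, hence equals the current minimum. It is also worth checking that the same reasoning applies verbatim to each polyculture run separately, since each run is itself a greedy run started from the shared $\mc H_0$. That is how \Cref{lem:all-right} uses the lemma, comparing $c/d$ against $a/b = S_2(0)/N_0$ after summing over runs.
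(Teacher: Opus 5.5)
Your proof is correct and follows essentially the same route as the paper: an induction showing that $\min\{\hat\mu_1(t),\hat\mu_2(t)\}$ is non-increasing, because the arm not pulled at each step keeps its empirical mean. Indexing the inductive step by the unpulled arm rather than by a fixed minimizer is a small improvement. It handles ties cleanly, whereas the paper's ``without loss of generality \dots arm 2 is chosen'' tacitly assumes a favorable tie-break, though harmlessly.
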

\begin{proof}
  Intuitively, this holds because the greedy algorithm only selects the arm with
  the higher mean, so the lower mean remains unchanged. The min of the two can
  therefore only decrease. We proceed by induction, with the trivial base case
  \begin{align*}
    \min\left\{\hat \mu_1(0), \hat \mu_2(0)\right\}
    &= \min\left\{\frac{S_1(0)}{N_0}, \frac{S_2(0)}{N_0}\right\}.
  \end{align*}
  To show the inductive step, assume without loss of generality that after time
  $t-1$, arm $1$ is the minimizer, i.e., $\hat \mu_1(t-1) \le \hat \mu_2(t-1)$.
  By definition of the greedy algorithm, arm 2 is chosen at time $t$. As a
  result, the mean for arm 1 is unchanged: $\hat \mu_1(t) = \hat \mu_1(t-1)$.
  Therefore,
  \begin{align*}
    \min\left\{\hat \mu_1(t), \hat \mu_2(t)\right\}
    &\le \hat \mu_1(t) \\
    &= \hat \mu_1(t-1) \\
    &= \min\left\{\hat \mu_1(t-1), \hat \mu_2(t-1)\right\}. \tag{By assumption,
    w.l.o.g.}
  \end{align*}
\end{proof}

\begin{lemma}
  \label{lem:lock-in-upper-bound}
\begin{align*}
    \lim_{t \to \infty} \PX(\hat \mu_2(t; \mc H_0) < \mu_1 \given
    R(t; \mc H_0)) = 1.
  \end{align*}
\end{lemma}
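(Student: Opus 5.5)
The event $R(t;\mc H_0)=\{L\le t\}\cap\{A_L=1\}$ says the run has locked in to arm 1 by time $t$. On this event, arm 2 is pulled only at times before $L$, so $\hat\mu_2(t)=\hat\mu_2(L)$ is frozen for all $t\ge L$. I will argue by cases on whether arm 2 was ever pulled after the initial history.

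\emph{Key inequality.} On $R(t)$, for every $t'\ge L$ the greedy chooses arm 1, so $\hat\mu_2(L)\le\hat\mu_1(t')$ for all $t'\ge L$. Along the locked-in tail, arm 1 is pulled infinitely often. By the strong law of large numbers, $\hat\mu_1(t')\to\mu_1$ almost surely on the positive-probability event $R=\{L<\infty,A_L=1\}$ (the SLLN holds conditionally, as used in \Cref{lem:some-wrong}). Letting $t'\to\infty$ gives $\hat\mu_2(L)\le\mu_1$ almost surely on $R$. This only yields the weak inequality, so the real work is upgrading it to a strict one.

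\emph{Strictness.} The frozen value $\hat\mu_2(L)$ takes values in the countable set $\{s/(N_0+n)\}$. The bad event is therefore $\{\hat\mu_2(L)=\mu_1\}\cap R$. If $\mu_1$ is not of the form $s/(N_0+n)$ for any integers (e.g.\ $\mu_1$ irrational), this event is empty and we are done. In general, I would fix a rational value $q=\mu_1$ and show that $P(R\cap\{\hat\mu_2(L)=\mu_1\})=0$. On this event, $\hat\mu_1(t')\ge\mu_1$ for all $t'\ge L$ (ties broken toward arm 1 at worst). Writing $\hat\mu_1(t')=(S_1(0)+Z_1(n))/(N_0+n)$, this requires the centered random walk $Z_1(n)-n\mu_1$ to stay above the fixed level $\mu_1 N_0-S_1(0)$ for all large $n$. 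A nondegenerate centered random walk with $0<\mu_1<1$ is recurrent and oscillates, with $\liminf=-\infty$ a.s.\ by the law of the iterated logarithm. So this event has probability zero, and $P(R\cap\{\hat\mu_2(L)\ge\mu_1\})=0$.

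\emph{Passing to $R(t)$.} Since $R(t)\uparrow R$ as $t\to\infty$ (and $P(R)>0$ whenever the conditioning is meaningful, cf.\ \Cref{lem:inf-prob}), monotone convergence lets me transfer the result. Specifically, $P(\{\hat\mu_2(t)\ge\mu_1\}\cap R(t))\le P(\{\hat\mu_2(L)\ge\mu_1\}\cap R)=0$ for every $t$, and $P(R(t))\to P(R)>0$. So the conditional probability is in fact $1$ for all large $t$, which gives the limit claimed in \Cref{lem:lock-in-upper-bound}.

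The main obstacle is the strictness step: excluding the tie $\hat\mu_2(L)=\mu_1$, which requires the random-walk oscillation argument rather than just the SLLN. A secondary point is handling the degenerate cases $\mu_1\in\{0,1\}$. With $\mu_1=1$, $\mu_1>\mu_2$ forces $\mu_2<1$, and then $\hat\mu_2(L)<1$ unless every arm-2 sample is 1; that case needs a separate check. The case $\mu_1=0$ is impossible since $\mu_1>\mu_2\ge0$.
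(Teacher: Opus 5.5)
Your proposal is correct and follows essentially the paper's route: the SLLN gives $\hat\mu_2(L)\le\mu_1$ on $R$, and the law of the iterated logarithm (equivalently, $\liminf_n (Z_1(n)-n\mu_1)=-\infty$ a.s.) forces $\hat\mu_1$ strictly below $\mu_1$ at some time after lock-in. This rules out $\hat\mu_2(L)\ge\mu_1$, because $\hat\mu_2$ is frozen at $\hat\mu_2(L)$ and dominated by $\hat\mu_1$ from $L$ on. Your inclusion $\{\hat\mu_2(t)\ge\mu_1\}\cap R(t)\subseteq\{\hat\mu_2(L)\ge\mu_1\}\cap R$ makes explicit the passage from $R$ to $R(t)$ that the paper leaves implicit, and your $\mu_1=1$ caveat is apt: the paper's LIL normalization $\sqrt{2\mu_1(1-\mu_1)\log\log n_1(t)/n_1(t)}$ likewise presumes $0<\mu_1<1$, and with ties broken toward arm 1 the claim genuinely fails there (e.g.\ when all $N_0$ initial arm-2 samples equal 1).
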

\begin{proof}
  We could prove a slightly weaker version of this statement using the strong
  law of large numbers: $\lim_{t \to \infty} \hat \mu_1(t) = \mu_1$, since arm
  $j$ will be selected at every $t \ge L$. Thus, if $\hat \mu_2(L) > \mu_1$,
  then eventually the two empirical means will cross at some point,
  contradicting the fact that $L$ was the last switching time.

  This argument would suffice to prove
  \begin{align*}
    \lim_{t \to \infty} \hat \mu_2(t; \mc H_0, \mc X) \given R(t; \mc
    H_0, \mc X)
    &= \hat \mu_2(L; \mc H_0, \mc X) \given R(t; \mc H_0, \mc X) \\
    &\le \mu_1
  \end{align*}
  with probability 1.
  We need to strengthen this to a strict inequality. To do so, we rely on the
  law of iterated logarithm \citep[e.g.~][]{kolmogorov-lil}, which states that
  if arm 1 is chosen infinitely often (so $n_1(t) \to \infty$),
  \begin{align*}
    \liminf_{t \to \infty} \frac{\hat \mu_1(t) - \mu_1}{\sqrt{2\mu_1(1-\mu_1)
    \log \log n_1(t) / n_1(t)}} = -1 \quad \text{a.s.}
  \end{align*}
  As a result, $\hat \mu_1(t)$ drops strictly below $\mu_1$ infinitely often
  almost surely. Formally,
  \begin{align*}
    \PX(\exists t > L : \hat \mu_1(t) < \mu_1) = 1.
  \end{align*}
  Conditioned on $R$, because $\hat \mu_{2}(t) = \hat \mu_{2}(L)$ for
  any $t > L$ (since this is the last time arm $2$ is pulled), with probability
  1, there exists $t$ such that
  \begin{align*}
    \mu_1
    &> \hat \mu_1(t)
    \tag{By the Law of the Iterated Logarithm} \\
    &\ge \hat \mu_{2}(t)
    \tag{Arm 1 is chosen for every $t > L$} \\
    &= \hat \mu_{2}(L) \tag{Arm 2 is never chosen for $t > L$}
  \end{align*}
  Therefore, with probability 1, conditioned on $R$, $\hat \mu_2(L) < \mu_1$.
\end{proof}

\begin{lemma}
  \label{lem:wrong-converge}
  For any $\mc H_0$,
  \begin{align*}
    \lim_{t \to \infty} \PX(W(t))
    &= \PX(W).
  \end{align*}
\end{lemma}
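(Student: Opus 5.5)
The plan is to show that $\PX(W(t))$ is nondecreasing in $t$ and converges upward to $\PX(W)$, by expressing $W(t)$ as an increasing family of events whose union is the event defining $W$. Throughout, $\mc H_0$ is fixed, so all events are measurable functions of $\mc X$ alone. Recall that $W(t) = \mathbf{1}[\{L < t/2\} \cap \{A_L = 2\}]$ and $W = \mathbf{1}[\{L<\infty\}\cap\{A_L=2\}]$.

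The first step is to check that the events $E_t := \{L < t/2\} \cap \{A_L = 2\}$ are nested. Neither $L$ nor $A_L$ depends on $t$, since both are functions of the entire infinite run $(\mc H_0,\mc X)$. So if $t \le t'$, then $\{L < t/2\} \subseteq \{L < t'/2\}$, and hence $E_t \subseteq E_{t'}$.

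The second step is to show that $\bigcup_{t} E_t = \{L < \infty\} \cap \{A_L = 2\}$. One inclusion is immediate, because $L< t/2$ forces $L<\infty$. For the other, if $L<\infty$ then $L$ is a finite integer, so $L<t/2$ for every $t>2L$.

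The final step applies continuity from below of the probability measure $\PX$ (equivalently, the Monotone Convergence Theorem applied to the indicators $W(t)\uparrow W$ pointwise). This gives $\lim_{t\to\infty}\PX(W(t)) = \PX\bigl(\bigcup_t E_t\bigr) = \PX(W)$.

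There is no real obstacle here. The only point needing care is the first step: confirming that $A_L$ is well defined as a function of the full run, rather than something that changes with the truncation $t$. It is well defined because $L$ is defined via all $t'\ge t$, not through the horizon. On $\{L=\infty\}$, $A_L$ is undefined, but that set is excluded from every $E_t$ and, by \Cref{lem:lock-in}, it has probability zero anyway.
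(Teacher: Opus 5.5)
Your proposal is correct and matches the paper's proof: the paper also notes that $W(t) \le W(t+1)$ and that $W(t)$ converges pointwise to $W$, then applies the Monotone Convergence Theorem. Your explicit checks of the nesting and of the union identity just spell out what the paper states in one line.
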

\begin{proof}
  Observe that $W(t) \le W(t+1)$. Moreover, $\{W(t)\}_{t=1}^{\infty}$ converges
  pointwise to $W$. The claim follows from the Monotone Convergence Theorem.
\end{proof}

\begin{lemma}
  \label{lem:blackwell-av-bound}
  Let $\{\hat \theta_t\}_{t=1}^{\infty}$ be the running mean of iid Bernoulli
  trials $\{Y_t\}_{t=1}^\infty$ with true mean $\theta$. Then, for any $t_0$,
  \begin{align*}
    \Pr\b{\exists t > t_0 : \hat \theta_t \le \theta - \sqrt{\frac{2
    \log(1/\delta)}{t_0}}} \le \delta.
  \end{align*}
\end{lemma}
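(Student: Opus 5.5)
The plan is to prove the bound with a single exponential supermartingale and Ville's maximal inequality, rather than a union bound over $t$. The point is that one fixed choice of the exponential tilt works uniformly for all $t > t_0$.

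\textbf{Setup.} Write $S_t = \sum_{s=1}^t (Y_s - \theta)$, so that $\hat\theta_t - \theta = S_t/t$. Set $\varepsilon = \sqrt{2\log(1/\delta)/t_0}$. The bad event is
\begin{align*}
  \bigl\{\exists t > t_0 : -S_t \ge t\varepsilon\bigr\}.
\end{align*}
By Hoeffding's lemma, each $-(Y_s - \theta)$ is centered and bounded in an interval of length $1$. Hence, for any fixed $\lambda > 0$, its conditional moment generating function is at most $e^{\lambda^2/8}$, which is in particular at most $e^{\lambda^2/2}$.

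\textbf{Supermartingale and Ville.} Independence then makes
\begin{align*}
  M_t = \exp\bigl(-\lambda S_t - \lambda^2 t/2\bigr)
\end{align*}
a nonnegative supermartingale with $M_0 = 1$. I use the looser constant $\lambda^2/2$ because it exactly reproduces the $\sqrt{2\log(1/\delta)/t_0}$ in the statement. Ville's inequality, i.e.\ Doob's maximal inequality for nonnegative supermartingales, gives
\begin{align*}
  \Pr\bigl[\exists t : M_t \ge 1/\delta\bigr] \le \delta.
\end{align*}

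\textbf{Inclusion of events.} It remains to show that the bad event is contained in $\{\exists t : M_t \ge 1/\delta\}$ for a suitable $\lambda$. On the bad event, at some $t > t_0$,
\begin{align*}
  \log M_t \ge t\bigl(\lambda\varepsilon - \lambda^2/2\bigr).
\end{align*}
Choose $\lambda = \varepsilon$, so that $\lambda\varepsilon - \lambda^2/2 = \varepsilon^2/2 > 0$. Since $t > t_0$ and the bracket is positive, we get
\begin{align*}
  \log M_t \ge t_0\varepsilon^2/2 = \log(1/\delta).
\end{align*}
So $M_t \ge 1/\delta$, and the claim follows.

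\textbf{Main obstacle.} The only delicate point is exactly this choice of a single $\lambda$ that is valid simultaneously for every $t > t_0$. This works because the deviation threshold $t\varepsilon$ grows linearly in $t$, at the same rate as the compensator $\lambda^2 t/2$. Therefore the exponent's lower bound is increasing in $t$, and it suffices to check it at $t = t_0$. Everything else is standard. Using the sharper Hoeffding constant $\lambda^2/8$ would only give slack; it would in fact allow $\delta^4$ in place of $\delta$, which is not needed.
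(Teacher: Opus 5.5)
Your proof is correct. It differs from the paper's in being self-contained rather than a reduction to a cited inequality.

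\textbf{The paper's route.} The paper invokes Blackwell's line-crossing bound for bounded martingale increments, $\Pr[\exists t : \sum_{s \le t}(\theta - Y_s) \ge a + bt] \le e^{-2ab}$. It chooses $a = bt_0$ and $b = \sqrt{\log(1/\delta)/(2t_0)}$, so that $e^{-2ab} = \delta$. It then uses the hypothesis $t > t_0$ in the threshold: $a/t + b = bt_0/t + b < 2b = \varepsilon$, so the event in the lemma is contained in Blackwell's crossing event.

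\textbf{Your route.} You derive the maximal inequality directly from Hoeffding's lemma and Ville's inequality. Your boundary $\varepsilon t$ passes through the origin, and you use $t > t_0$ in the exponent instead, via $t\varepsilon^2/2 \ge t_0\varepsilon^2/2$.

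\textbf{Why they agree.} Underneath, the two arguments are the same. An exponential-supermartingale proof of $e^{-2ab}$ uses the tilt $\lambda = 2b$ with compensator $\lambda^2 t/2$, which is what Hoeffding gives for increments bounded by $1$ in absolute value. Your deliberate weakening from $\lambda^2/8$ to $\lambda^2/2$ is exactly that normalization. Moreover, $2b = \varepsilon$ is precisely your choice of $\lambda$, so both routes yield the identical bound $e^{-t_0\varepsilon^2/2} = \delta$.

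\textbf{What each buys.} The paper's version is shorter given the citation. Yours makes the origin of the constant transparent and verifies the hypotheses explicitly. Your closing remark is also right: with the sharp compensator $\lambda^2 t/8$ and $\lambda = 4\varepsilon$, one gets $\delta^4$. So the stated lemma is loose, though the stated form is all that the lock-in argument where it is applied requires.
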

\begin{proof}
  This follows directly from \citet[Thm. 1]{blackwell1997large}, which states
  that for $a,b > 0$,
  \begin{align*}
    \Pr\b{\exists t : \p{\sum_t \theta - Y_t} \ge a + bt}
    &\le \exp(-2ab) \\
    \Pr\b{\exists t : \theta - \hat \theta_t \ge \frac{a}{t} + b}
    &\le \exp(-2ab) \\
    \Pr\b{\exists t : \hat \theta_t \le \theta - \frac{a}{t} - b}
    &\le \exp(-2ab).
  \end{align*}
  Choose $a = b t_0$ and $b = \sqrt{\log(1/\delta)/(2t_0)}$. This yields
  \begin{align*}
    \Pr\b{\exists t : \hat \theta_t \le \theta -
      t_0 \frac{\sqrt{\log(1/\delta)/(2t_0)}}{t} -
    \sqrt{\frac{\log(1/\delta)}{2 t_0}}}
    &\le \exp\p{- 2t_0 \frac{\log(1/\delta)}{2t_0}} \\
    \Pr\b{\exists t : \hat \theta_t \le \theta -
      t_0 \frac{\sqrt{\log(1/\delta)/(2t_0)}}{t} -
    \sqrt{\frac{\log(1/\delta)}{2 t_0}}}
    &\le \delta
  \end{align*}
  Restricting to $t > t_0$ only decreases the LHS, so
  \begin{align*}
    \Pr\b{\exists t > t_0 : \hat \theta_t \le \theta -
      t_0 \frac{\sqrt{\log(1/\delta)/(2t_0)}}{t} -
    \sqrt{\frac{\log(1/\delta)}{2 t_0}}}
    &\le \delta \\
    \Pr\b{\exists t > t_0 : \hat \theta_t \le \theta -
    \sqrt{\frac{2\log(1/\delta)}{t_0}}}
    &\le \delta
    \tag{$t > t_0$}
  \end{align*}
  as desired.
\end{proof}

\begin{lemma}
  \label{lem:inf-inequality}
  Let $\delta$ be a real-valued random variable such that $\Pr[\delta > 0] = 1$.
  Let $Y_t$ be a sequence of random variables such that $Y_t \asto 0$. Then,
  \begin{align*}
    \lim_{t \to \infty} \Pr[Y_t < \delta] = 1.
  \end{align*}
\end{lemma}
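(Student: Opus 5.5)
Fix $\epsilon > 0$. The plan is to show that outside an event of probability at most $\epsilon$, $Y_t$ eventually lies below $\delta$; the key is to convert almost-sure convergence into a uniform tail statement and pair it with a deterministic lower threshold for $\delta$. First, since $\Pr[\delta > 0] = 1$, continuity of probability from below on the increasing events $\{\delta > 1/m\}$ gives $\Pr[\delta > 1/m] \uparrow \Pr[\delta > 0] = 1$. We therefore fix $m$ with
\begin{align*}
\Pr[\delta \le 1/m] < \epsilon/2.
\end{align*}
This step replaces the random margin by a deterministic one and is where the hypothesis on $\delta$ enters.

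Next, $Y_t \asto 0$ means that, almost surely, $|Y_s| < 1/m$ for all sufficiently large $s$. The events $B_t = \{|Y_s| < 1/m \text{ for all } s \ge t\}$ are increasing in $t$, and their union has probability one. Again by continuity from below, we choose $t_0$ with $\Pr[B_{t_0}] > 1 - \epsilon/2$.

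For every $t \ge t_0$, on the event $B_{t_0} \cap \{\delta > 1/m\}$ we have $Y_t < 1/m < \delta$. A union bound on the complements then gives
\begin{align*}
\Pr[Y_t < \delta] \ge 1 - \Pr[B_{t_0}^c] - \Pr[\delta \le 1/m] > 1 - \epsilon.
\end{align*}
Since $\epsilon$ was arbitrary and probabilities are bounded by $1$, $\lim_{t\to\infty}\Pr[Y_t<\delta]=1$.

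No independence between $\delta$ and the sequence $Y_t$ is needed, which matters for the application in \Cref{lem:all-right}, where both are driven by the same runs. In fact, only convergence in probability of $Y_t$ is used: we could instead bound $\Pr[|Y_t| \ge 1/m] < \epsilon/2$ for large $t$.

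The only mildly delicate point is that $\delta$ need not be bounded away from zero. The truncation at $1/m$ handles this, so there is no real obstacle.
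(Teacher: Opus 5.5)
Your proof is correct, but it takes a different route from the paper's. The paper argues pathwise. On the probability-one event where $Y_t \to 0$ and $\delta > 0$, one has $\limsup_t Y_t = 0 < \delta$, so $Y_t < \delta$ for all large $t$. Hence $\mathbf{1}[Y_t < \delta] \to 1$ almost surely, and the Dominated Convergence Theorem turns this into $\Pr[Y_t < \delta] \to 1$. You never take a pointwise limit of the indicators. Instead you replace the random margin $\delta$ by a deterministic margin $1/m$ that $\delta$ exceeds with probability greater than $1-\epsilon/2$. In effect this uses the inclusion $\{Y_t \ge \delta\} \subseteq \{\delta \le 1/m\} \cup \{|Y_t| \ge 1/m\}$, valid for every $m$ and $t$, after which only continuity of measure and a union bound are needed. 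The paper's version is shorter and relies on a standard convergence theorem. Yours is fully elementary and, as you observe, proves the stronger statement where $Y_t \to 0$ only in probability. In that setting the indicators need not converge almost surely, so the paper's pathwise step would need an extra subsequence argument. Your detour through the events $B_t$ is exactly the standard proof that almost-sure convergence implies convergence in probability. You could replace it with a citation of that fact and apply the inclusion above directly.
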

\begin{proof}
  Since $Y_t \to 0$ a.s., $\limsup_{t \to \infty} Y_t = 0$ a.s. Since
  $\delta > 0$ a.s., $\limsup_{t \to \infty} Y_t < \delta$ a.s., and so
  by definition of $\limsup$, $Y_t < \delta$ for all large $t$ a.s.
  Therefore $\mathbf{1}[Y_t < \delta] \xrightarrow[t \uparrow \infty]{} 1$ a.s.,
  and by the Dominated Convergence Theorem,
  \begin{align*}
    \lim_{t \to \infty} \Pr[Y_t < \delta]
    = \lim_{t \to \infty} \mathbb{E}[\mathbf{1}[Y_t < \delta]]
    = 1.
  \end{align*}
\end{proof}

\begin{lemma}
  \label{lem:abcd}
  Let $a,b,c,d$ be nonnegative. Assume
  \begin{align*}
    \frac{a}{b} \ge \frac{c}{d} .
  \end{align*}
  Then, for any $\alpha \in (0, 1)$,
  \begin{align*}
    \frac{\alpha a + c}{\alpha b + d} \le \frac{a + c}{b + d} 
  \end{align*}
\end{lemma}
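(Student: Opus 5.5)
The plan is to clear denominators and reduce the claim to the hypothesis $a/b \ge c/d$ written multiplicatively. Because the statement involves the ratios $a/b$ and $c/d$, I will assume $b, d > 0$ so that these ratios are defined. In the application in \Cref{lem:all-right} this holds: $b = kN_0 > 0$, and $d = \sum_{i} n_2^{(i)}(t)$ is positive once arm 2 has been pulled at least once. Under this assumption both denominators $\alpha b + d$ and $b + d$ are strictly positive. The hypothesis is then equivalent to $ad \ge bc$, and the desired inequality is equivalent to
\begin{align*}
  (\alpha a + c)(b + d) \le (a + c)(\alpha b + d).
\end{align*}

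Next I will expand both sides. The left side is $\alpha ab + \alpha ad + bc + cd$ and the right side is $\alpha ab + ad + \alpha bc + cd$. The terms $\alpha ab$ and $cd$ cancel. Subtracting the left side from the right side leaves
\begin{align*}
  ad + \alpha bc - \alpha ad - bc = (1-\alpha)(ad - bc).
\end{align*}
This is nonnegative because $\alpha < 1$ and $ad \ge bc$, which proves the claim. In fact the argument works for any $\alpha \in [0,1]$, with equality when $\alpha = 1$ or $ad = bc$.

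There is no real obstacle; the only delicate point is the degenerate case where a denominator vanishes. If the case $d = 0$ ever needs to be covered, I would either read $c/d$ as $+\infty$ when $c > 0$, so that the hypothesis fails, or note that $c = d = 0$ makes both sides equal to $a/b$. Either way the displayed inequality still holds.
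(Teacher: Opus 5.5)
Your proof is correct and follows the same route as the paper: clear denominators, expand, cancel the common terms $\alpha ab$ and $cd$, and reduce the claim to $(1-\alpha)(ad-bc)\ge 0$, which is the hypothesis. You also make explicit that $b,d>0$ is needed to cross-multiply, and you handle the degenerate case $c=d=0$, where both sides equal $a/b$; the paper leaves both points implicit.
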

\begin{proof}
\begin{align*}
    \frac{\alpha a + c}{\alpha b + d}
    &\stackrel{?}{\le} \frac{a + c}{b + d} \\
    (\alpha a + c)(b+d)
    &\stackrel{?}{\le} (\alpha b + d)(a+c) \\
    \alpha a b + \alpha a d + bc + cd
    &\stackrel{?}{\le} \alpha a b + \alpha b c + ad + cd \\
    \alpha a d + bc
    &\stackrel{?}{\le} \alpha b c + ad \\
    bc(1-\alpha) 
    &\stackrel{?}{\le} ad (1-\alpha) \\
    (1-\alpha) \frac{c}{d}
    &\stackrel{\checkmark}{\le} (1-\alpha) \frac{a}{b}.
  \end{align*}
\end{proof}

\end{document}